\newtheorem{theorem}{Theorem}[section]
\newtheorem{proposition}[theorem]{Proposition}
\newtheorem{corollary}[theorem]{Corollary}
\newtheorem{lemma}[theorem]{Lemma}
\theoremstyle{definition}
\newtheorem{definition}[theorem]{Definition}
\theoremstyle{remark}
\newtheorem{remark}[theorem]{Remark}
\newcommand{\la}{\left\langle}
\newcommand{\ra}{\right\rangle}
\newcommand{\lb}{\left (}
\newcommand{\rb}{\right )}
\newcommand{\A}{\mathcal{A}}
\newcommand{\E}{\mathcal{E}}
\newcommand{\W}{\mathcal{W}}
\renewcommand{\P}{\mathbb{P}^1}
\newcommand{\Z}{\mathbb{Z}}
\renewcommand{\la}{\left\langle}
\renewcommand{\ra}{\right\rangle}
\DeclareMathOperator{\Id}{Id}
\DeclareMathOperator{\Res}{Res}
\newcommand{\lv}{\left\langle\left |}
\newcommand{\rv}{\right |\right\rangle}
\newcommand{\cor}[1]{\left\langle #1 \right\rangle}
\newcommand{\Mbar}{{\overline{\mathcal{M}}}}
\newcommand{\bC}{{\mathbb{C}}}
\newcommand{\bQ}{{\mathbb{Q}}}
\newcommand{\cA}{{\mathcal{A}}}
\newcommand{\cD}{{\mathcal{D}}}
\newcommand{\cS}{{\mathcal{S}}}
\newcommand{\rar}{\rightarrow}
\newcommand{\lrar}{\longrightarrow}
\newcommand{\llar}{\longleftarrow}
\newcommand{\half}{{\frac{1}{2}}}
\numberwithin{equation}{section}
\begin{document}

\title[Quantum spectral curve for 
complex projective line]{Quantum spectral curve for the Gromov-Witten theory of the complex
projective line}

\author[Dunin-Barkowski]{P.~Dunin-Barkowski}

\address{P.~D.-B.: Korteweg-de~Vries Institute for Mathematics, University of Amsterdam, P.~O.~Box 94248, 1090 GE Amsterdam, The Netherlands and ITEP, Moscow, Russia}

\email{P.Dunin-Barkovskiy@uva.nl}

\thanks{\hspace{-.4cm}P.D.-B.\ is supported by a free competition grant of the NWO; M.M.\ is supported by 
NSF grants DMS-1104734 and
DMS-1309298; P.N.\ is supported by ARC grant DP1094328; A.P.\ and S.S.\ are supported by a Vici grant 
of the NWO; and S.S.\ is  supported by 
a Vidi grant of the NWO}

\author[Mulase]{M.~Mulase}

\address{M.M.: Department of Mathematics, University of California, Davis, CA 95616-8633, U.S.A.}

\email{mulase@math.ucdavis.edu}

\author[Norbury]{P.~Norbury}

\address{P.N.: Department of Mathematics and Statistics, The University of Melbourne, Victoria 3010, Australia}

\email{pnorbury@ms.unimelb.edu.au}

\author[Popolitov]{A.~Popolitov}

\address{A.P.: Korteweg-de Vries Institute for Mathematics, University of Amsterdam, Postbus 94248, 1090 GE Amsterdam, The Netherlands and ITEP, Moscow, Russia}

\email{A.Popolitov@uva.nl}

\author[Shadrin]{S.~Shadrin}

\address{S.S.: Korteweg-de Vries Institute for Mathematics, University of Amsterdam, Postbus 94248, 1090 GE Amsterdam, The Netherlands}

\email{S.Shadrin@uva.nl}
\subjclass[2010]{14N35; 05A17; 81T45}

\begin{abstract}
We construct the quantum  curve  for the 
Gromov-Witten theory of the complex
projective line.
\end{abstract}

\maketitle

\tableofcontents

\section{Introduction}

The purpose of this paper is to construct
 the \emph{quantum curve} for 
the Gromov-Witten invariants of the 
complex projective line $\P$.
Quantum curves are conceived in the physics
literature, including 
\cite{ADKMV06, DHS, DHSV, 
 EM, GS12, Hollands}. They quantize the
 \emph{spectral curves} of the theory, and 
 are conjectured to capture the
information of many topological invariants, 
such as certain Gromov-Witten invariants,
quantum knot invariants, and cohomology of 
instanton moduli spaces
 for $4$-dimensional gauge theory.
In this paper we show that
the conjecture is indeed 
true for the Gromov-Witten theory of
$\P$.

\subsection{Spectral curves and quantum curves}

When spectral curves  appear  
in mathematics, they  take
various  different forms,
and even look as totally different objects. For example,
they can be the mirror curve
of a toric Calabi-Yau $3$-fold, the 
$SL_2$-character
variety of the fundamental group of a knot 
complement, or a Seiberg-Witten curve. In the
context of the Gromov-Witten theory of $\P$, 
it is the \emph{Landau-Ginzburg model}
\begin{equation}
\label{eq:LGM}
x = z + \frac{1}{z},
\end{equation}
which is the homological mirror dual of 
$\P$ with respect to
the standard K\"ahler structure.
Our main theorem (Theorem~\ref{thm:main}
below)
states that the quantization of \eqref{eq:LGM},
which we call a quantum curve,
characterizes the exponential generating 
function of Gromov-Witten invariants of
$\P$.

In a purely algebro-geometric setting, a quantum 
curve can be understood in the following 
way \cite{DM13}.
Let $C$ be a non-singular complex 
projective algebraic curve, and
$\eta$ the tautological $1$-form on the
cotangent bundle $T^*C$. A 
spectral curve
$\Sigma$ is a complex $1$-dimensional
subvariety 
\begin{equation}
\label{eq:spectral}
\begin{CD}
\iota :\Sigma @>>> T^*C
\\
&&@VV{\pi}V
\\
&&C
\end{CD}
\end{equation}
in the cotangent bundle, which is automatically
a Lagrangian subvariety with respect to the standard
symplectic form $-d\eta$. A quantum curve
is an $\hbar$-deformed
 $D$-module on the $1$-parameter formal
family $C[[\hbar]]$ of the curve $C$,
whose \emph{semi-classical limit} coincides with 
the spectral curve $\Sigma$. On an affine 
piece of the base curve $C$ with coordinate
$x$, we can choose a generator $P$ of the
$D$-module and consider a 
\emph{Schr\"odinger-like} equation
\begin{equation}
\label{eq:Sch-intro}
P(x,\hbar)\Psi(x,\hbar) = 0.
\end{equation}
The construction of the quantum curve
in this  setting
is established in \cite{DM13}
for $SL(2,\bC)$ Hitchin
fibrations.

The geometric situation we consider in this paper is slightly different. 
Instead of the cotangent bundle $T^*C$ in \eqref{eq:spectral}, we have a surface $X$ equipped with a 
$\bC^*$-invariant holomorphic symplectic form and a spectral curve $\Sigma$ is mapped into it.  The base curve $C$ is replaced by the quotient $X/\bC^*$.  For example, if a curve $C$ admits a $\bC^*$-action then the natural holomorphic symplectic form on $X=T^*C$ is $\bC^*$-invariant.  In local coordinates for $T^*C$, the $\bC^*$-action is given by $c\cdot(w,z)=(cw,c^{-1}z)$ and the symplectic form is given by
 $dx\wedge (dz/z)$ where $x=wz$ is the quotient map $(w,z)\mapsto wz$ by the $\bC^*$-action.
Reflecting the $\bC^*$-action, the
quantum curve \eqref{eq:Sch-intro} becomes
a differential equation of \emph{infinite} order, or a \emph{difference}
equation.

We present in this paper the first rigorous
example of
a direct connection between Gromov-Witten
theory and  quantum curves. Our construction 
requires the  
fermionic Fock space representation of
the Gromov-Witten invariants \cite{OP06}, and 
a subtle combinatorial analysis based on 
representation theory of symmetric groups.

\subsection{Main theorem}

Let $\Mbar_{g,n}(\P,d)$ denote the 
moduli space of stable maps of degree $d$
from an $n$-pointed genus $g$ curve to
$\P$. This is an algebraic stack of 
dimension $2g-2+n +2d$. The dimension
reflects the fact that a generic map
from an algebraic curve to $\P$ has 
only simple ramifications, which we
can see from the Riemann-Hurwitz 
formula. The  descendant 
Gromov-Witten invariants of 
$\P$ are defined by 
\begin{equation}
\label{eq:GW}
\la \prod_{i=1}^n
\tau_{b_i}(\alpha_i)\ra_{g.n} ^d
:=
\int _{[\Mbar_{g,n}(\P,d)]^{vir}}
\prod_{i=1}^n \psi_i ^{b_i} ev_i^*(\alpha_i),
\end{equation}
where 
$[\Mbar_{g,n}(\P,d)]^{vir}$ is the virtual 
fundamental class of the moduli space,
\begin{equation*}
ev_i:\Mbar_{g,n}(\P,d)\lrar \P
\end{equation*}
is a natural morphism defined by
evaluating a stable map at the $i$-th marked 
point of the source curve, $\alpha_i\in H^*(\P,\bQ)$
is a cohomology class of the target $\P$,
and $\psi_i$ is  the tautological cotangent 
class in $H^2(\Mbar_{g,n}(\P,d),\bQ)$.
We denote by $1$ the generator of
$H^0(\P,\bQ)$, and by $\omega\in
H^2(\P,\bQ)$ the Poincar\'e dual to the 
point class.
We assemble the Gromov-Witten invariants
into particular generating functions as follows.
For every $(g,n)$ in the stable sector
$2g-2+n>0$, we define the \emph{free energy} of 
type $(g,n)$
by
\begin{equation}
\label{eq:Fgn-intro}
F_{g,n}(x_1,\dots,x_n)
:= 
\la \prod_{i=1}^n \left(
-\frac{\tau_0(1)}{2} - \sum_{b=0}^\infty
\frac{b!\tau_b(\omega)}{x_i^{b+1}}
\right)
\ra_{g,n}.
\end{equation}
Here the degree $d$ is determined by the 
dimension condition of the 
cohomology classes to be integrated
over the virtual fundamental class.
We note that \eqref{eq:Fgn-intro} contains 
the class $\tau_0(1)$. 
For unstable geometries, we introduce two 
functions
\begin{align}
\label{eq:S0-intro}
S_0(x) &:=
x-x\log x +
\sum_{d=1}^\infty 
\la -\frac{(2d-2)!\tau_{2d-2}(\omega)}
{x^{2d-1}}
\ra_{0,1}^d,
\\
\label{eq:S1-intro}
S_1(x) &:=
-\half \log x+
\half \sum_{d=0}^\infty 
\la 
\left(
-\frac{\tau_0(1)}{2} - \sum_{b=0}^\infty
\frac{b!\tau_b(\omega)}{x^{b+1}}
\right)^2
\ra_{0,2}^d.
\end{align}
The appearance of the extra terms, in particular
the $\log x$ terms, will be explained in 
Section~\ref{sec:shift-of-variable}.
We shall prove the following.

\begin{theorem}[Main Theorem]
\label{thm:main}
The wave function
\begin{equation}
\label{eq:Psi}
\Psi(x,\hbar)
:=\exp\left(
\frac{1}{\hbar}S_0(x) + S_1(x)
+\sum_{2g-2+n>0}\frac{\hbar^{2g-2+n}}{n!}
F_{g,n}(x,\dots,x)
\right)
\end{equation}
satisfies the quantum curve equation
of an infinite order
\begin{equation}
\label{eq:QCE}
\left[
\exp\left(
\hbar\frac{d}{dx}
\right)
+
\exp\left(
-\hbar\frac{d}{dx}
\right)
-x
\right]
\Psi(x,\hbar) = 0.
\end{equation}
Moreover, 
the free energies $F_{g,n}(x_1,\dots,x_n)$
as functions in $n$-variables,
 and hence 
all the Gromov-Witten invariants \eqref{eq:GW},
can be recovered 
from the equation
\eqref{eq:QCE} alone,
using the mechanism of the
\textbf{topological recursion}
of \cite{CEO, EO07}.
\end{theorem}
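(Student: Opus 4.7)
The plan is to prove the quantum curve equation \eqref{eq:QCE} by reducing it to a representation-theoretic identity in the fermionic/bosonic Fock space, using the Okounkov-Pandharipande \cite{OP06} description of the Gromov-Witten theory of $\P$. First I would express the stationary descendants $\la \prod_i \tau_{b_i}(\omega)\ra_{g,n}^d$ as matrix elements of the completed-cycle operators $\cE_k$ in the infinite wedge representation, and package the exponential generating function $\Psi(x,\hbar)$ of \eqref{eq:Psi} as a vacuum expectation $\la 0 | e^{A(x,\hbar)} | v\ra$, where $A$ is linear in the bosonic modes and $|v\ra$ is a state built from the target geometry. The classical/unstable contributions $S_0(x)$ and $S_1(x)$, including the $\log x$ terms flagged for Section~\ref{sec:shift-of-variable}, should be interpreted as arising from a shift of variable that absorbs the non-stationary $\tau_0(1)$ insertions and produces the classical WKB phase, so that after this shift $\Psi$ has a clean Fock-space interpretation.

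Second, I would observe that the translations $x \mapsto x \pm \hbar$ in $\Psi(x,\hbar)$ implement, under this Fock-space identification, conjugation by an explicit shift operator, so that $e^{\pm \hbar d/dx}\Psi$ corresponds to insertion of bosonic creation/annihilation contributions tied to the two sheets $z$ and $1/z$ of the Landau-Ginzburg mirror $x = z + 1/z$. The desired identity $(e^{\hbar d/dx}+e^{-\hbar d/dx}-x)\Psi = 0$ should then be equivalent to the operator statement that these two shifts sum to multiplication by $x$ on the relevant correlator, promoting the spectral curve relation to the operator level. I expect to verify this via a direct, but delicate, computation in the infinite wedge, combining the stationary part (via completed cycles) with the handling of $\tau_b(1)$ insertions through the string and divisor equations on $\Mbar_{g,n}(\P,d)$.

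The main obstacle will be exactly this combinatorial/representation-theoretic step: one must establish the identity uniformly in genus and number of marked points, not just term by term, and one must keep precise track of the normalization constants absorbed into the shift, so that $S_0$ and $S_1$ emerge with the correct coefficients. This is the ``subtle combinatorial analysis based on representation theory of symmetric groups'' advertised in the introduction, and it is where the argument has to do real work beyond formal manipulation of generating functions.

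Finally, once \eqref{eq:QCE} is established, the recovery of all $F_{g,n}(x_1,\dots,x_n)$ by topological recursion follows from a now-standard route: performing the WKB expansion of $\Psi$ in powers of $\hbar$ gives a hierarchy whose leading order reproduces the spectral curve $x = z + 1/z$ of \eqref{eq:LGM}, and whose subleading orders match the Chekhov-Eynard-Orantin recursion of \cite{CEO, EO07} on that spectral curve. Combining this with the identification of the Eynard-Orantin invariants of $x=z+1/z$ with the Gromov-Witten invariants of $\P$ (established by Norbury-Scott and its refinements) recovers the free energies as functions of $n$ variables, and hence all Gromov-Witten invariants \eqref{eq:GW}, from \eqref{eq:QCE} alone.
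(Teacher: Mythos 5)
Your outline follows the same route as the paper (Fock-space representation via Okounkov--Pandharipande, a shift of variable absorbing the $\tau_0(1)$ insertions via the string equation, and recovery of $F_{g,n}$ through the topological recursion), but it stops exactly where the proof has to happen. The step you describe as ``a direct, but delicate, computation in the infinite wedge'' cannot be carried out at the operator level in the way you sketch: the shifts $x\mapsto x\pm\hbar$ do not act by inserting creation/annihilation modes attached to the two sheets of $x=z+1/z$. What actually happens is that, after the half-shift $e^{\frac{\hbar}{2}\frac{d}{dx}}$ and conjugation by the scalar $A_2=B\left(-\hbar\frac{d}{dx}\right)\left(\frac{x-x\log x}{\hbar}\right)$, the operator $\E_0\left(-\hbar\frac{d}{dx}\right)(\log x)$ becomes the diagonal operator $\sum_k\log\left(x-\left(k-\frac12\right)\hbar\right):\psi_k\psi_k^*:$, whose exponential acts on each $v_\lambda$ by the explicit eigenvalue $\prod_i\frac{x+(i-\lambda_i)\hbar}{x+i\hbar}$. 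This collapses $\Psi$ (up to the explicit prefactor $e^{A_2}$) to the generating series of the rational functions $X_d$ of \eqref{eq:Xd-intro}, and the conjugation converts \eqref{eq:QCE} into the first-order recursion \eqref{eq:X-intro} in $d$. That recursion is the genuine content of the theorem and is \emph{not} a formal consequence of the spectral-curve relation: the paper proves it by rewriting $X_d$ in terms of the $g$-functions $g_\lambda(y)=\prod_i(y+\lambda_i-i)$, invoking Han's hook-length identity to pass between partitions of $d$ and of $d+1$, and showing by induction that the resulting polynomial $Y_d(y)$ vanishes identically. Without this (or an equivalent) combinatorial input, your argument contains no proof of \eqref{eq:QCE}.

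A secondary gap: in the last paragraph you treat the recovery of $F_{g,n}(x_1,\dots,x_n)$ as standard, but the Eynard--Orantin recursion produces differential forms $\W_{g,n}$, not functions, and the free energies \eqref{eq:Fgn-intro} contain the non-stationary class $\tau_0(1)$, which is invisible in the stationary expansion \eqref{eq:NS}. The paper must construct a canonical primitive (Definition~\ref{def:Fgn}) via the antisymmetric integrals $\theta^i_d$ of the basis forms $W^i_d$, and then use the Givental $S$-matrix relating ancestor to descendant correlators to verify that this primitive has exactly the expansion \eqref{eq:Fgn}; that verification is Theorem~\ref{thm:Fgn} and does not follow automatically from the Norbury--Scott/DOSS identification alone.
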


\begin{remark}Put 
\begin{equation}
\label{eq:Sm-intro}
S_m(x) := \sum_{2g-2+n=m-1}\frac{1}{n!}
F_{g,n}(x,\dots,x).
\end{equation}
Then our wave function is of the form
\begin{equation}
\label{eq:WKB}
\Psi(x,\hbar) = \exp\left( \sum_{m=0}^\infty
\hbar^{m-1} S_m(x)\right),
\end{equation}
which provides the WKB approximation 
of the quantum curve equation 
\eqref{eq:QCE}. Thus the significance of 
\eqref{eq:Fgn-intro} is that the 
exponential generating function 
\eqref{eq:Psi} of the
descendant Gromov-Witten invariants of 
$\P$ gives the  solution to the \emph{exact}
WKB analysis for the difference equation
\eqref{eq:QCE}.
\end{remark}

\begin{remark}
For the case of Hitchin fibrations \cite{DM13},
the Schr\"odinger-like equation
\eqref{eq:Sch-intro} is a direct consequence 
of the generalized topological recursion. In our
current context, the topological recursion does 
not play any role in establishing 
\eqref{eq:QCE}. 
\end{remark}

\begin{remark}
Although the shape of the operator 
in \eqref{eq:QCE} has a similarity with the 
Lax operator of the
Toda lattice equations that control the 
Gromov-Witten invariants of 
$\P$ \cite{OP06}, we are unable to 
find any direct relations between these
two apparently different equations. 
We present a detailed
comparison of these equations in 
Section~\ref{sec:Toda}.
\end{remark}

\subsection{WKB approximation, 
topological recursion, and representation theory}

The  WKB analysis provides a 
perturbative quantization method
of a classical mechanical problem. We can
recover the classical problem corresponding 
to \eqref{eq:QCE} by taking its 
semi-classical limit, which is the 
singular perturbation limit
\begin{multline}
\label{eq:SCL}
\lim_{\hbar\rar 0}
\left(
e^{-\frac{1}{\hbar}S_0(x)}
\left[
\exp\left(
\hbar\frac{d}{dx}
\right)
+
\exp\left(
-\hbar\frac{d}{dx}
\right)
-x
\right]
e^{\frac{1}{\hbar}S_0(x)}
e^{ \sum_{m=1}^\infty
\hbar^{m-1} S_m(x)}
\right)
\\
=
\left(e^{S_0'(x)}+e^{-S_0'(x)}-x
\right)
e^{S_1(x)}=0.
\end{multline}
In terms of new variables
$y(x) = S_0'(x)$ and $z(x) = e^{y(x)}$, 
the semi-classical limit  gives us 
an equation for the spectral curve
$$
z\in \Sigma = \bC^*\subset \bC \times \bC^*
\overset{\exp}{\llar} T^*\bC = \bC^2
\owns (x,y)
$$ by
\begin{equation}
\label{eq:spectral-intro}
\begin{cases}
x = z+\frac{1}{z}\\
z = e^y
\end{cases}.
\end{equation}
This is the reason we consider \eqref{eq:QCE}
as the quantization of the Laudau-Ginzburg
model \eqref{eq:LGM}.

It was conjectured in \cite{NS11} that the
stationary Gromov-Witten theory of $\P$ should
satisfy the topological recursion
of \cite{CEO,EO07}
with respect to the spectral curve
\eqref{eq:spectral-intro}. We refer to 
\cite{DM13,DMSS12, NS11} for a mathematical 
formulation of the topological recursion.
The conjecture
is solved in \cite{DOSS12} as a corollary to 
its main theorem, which 
establishes the correspondence between the 
topological 
recursion and the Givental formalism.

The quantum curve equation
\eqref{eq:QCE}
determines only the function $\Psi$, and by
the $\hbar$-expansion, each coefficient
$S_m(x)$. But then how do we possibly 
recover $F_{g,n}$ for each $(g,n)$ 
as a function in 
$n$ variables? Here comes the 
significance of the topological recursion 
of \cite{CEO,EO07}, which was established
in \cite{DOSS12} for the case of the Gromov-Witten
theory of $\P$. 
The scenario goes as follows. First we 
note that the semi-classical limit of \eqref{eq:QCE}
identifies the spectral curve \eqref{eq:spectral-intro}.
We then launch the topological recursion formalism
of \cite{CEO,EO07} for this particular
spectral curve,
and obtain symmetric differential $n$-forms
$\W_{g,n}(z_1,\dots,z_n)$ on $\Sigma^n$.
In this paper we will present a canonical way
to \emph{integrate} these $n$-forms, 
which yields the free energy 
$F_{g,n}(x_1,\dots,x_n)$ for every $(g,n)$
subject to $2g-2+n>0$. In this sense
the single equation \eqref{eq:QCE}
knows the information of all Gromov-Witten
invariants \eqref{eq:GW}. This shows the
power of quantum curves.

The key discovery of the present paper is that
the quantum curve equation \eqref{eq:QCE}
is equivalent to a  recursion equation
\begin{equation}
\label{eq:X-intro}
\frac{x}{\hbar}
\left(
e^{-\hbar\frac{d}{dx}} -1
\right) X_d(x,\hbar)
+\frac{1}{1+\frac{x}{\hbar}}
e^{\hbar\frac{d}{dx}}X_{d-1}(x,\hbar)=0
\end{equation}
for a rational function
\begin{equation}
\label{eq:Xd-intro}
X_d(x,\hbar) = \sum_{\lambda\vdash d}
\left(
\frac{\dim \lambda}{d!}
\right)^2
\prod_{i=1}^{\ell(\lambda)}
\frac{x+(i-\lambda_i)\hbar}{x+i\hbar}.
\end{equation}
Here $\lambda$ is a partition of $d\ge 0$
with parts $\lambda_i$ and 
$\dim \lambda$ denotes the dimension 
of the irreducible representation of 
the symmetric group $S_d$ 
characterized  by $\lambda$.

\subsection{Organization of the paper}

This paper is organized as follows.
In Section~\ref{sec:Fgn} we start with 
a solution $\W_{g,n}$ to the topological 
recursion equation
with respect to the spectral curve
$\Sigma$ of
\eqref{eq:spectral-intro}. It is a symmetric
differential form of degree $n$ on
$\Sigma^n$. We then propose a
unique mechanism to integrate $\W_{g,n}$
into a rational function. 
The  goal of this section is to show that 
this primitive function is identical to 
\eqref{eq:Fgn-intro}.
Then in
Section~\ref{sec:shift-of-variable},
we re-write $\Psi(x,\hbar)$ in a different manner,
only involving stationary Gromov-Witten
invariants of $\P$. 
This formula  allows us to express it in terms
of a \emph{semi-infinite wedge product} in
Section~\ref{sec:reduction-to-semi-infinite-wedge}.
Using this formalism, we  reduce the
quantum curve equation 
\eqref{eq:QCE} to a combinatorial equation
\eqref{eq:X-intro} in
Section~\ref{sec:combinatorial}.  
Equation~\eqref{eq:X-intro} is then proved in
 Section~\ref{sec:key} using representation 
 theory of $S_d$, which in tern  establishes 
 \eqref{eq:QCE}. For completeness, 
 we give an expression of \eqref{eq:Xd-intro}
 in terms of  special values of  the Laguerre 
 polynomials in  Section~\ref{sec:Laguerre}.
Section~\ref{sec:Toda} is devoted to 
the comparison of 
\eqref{eq:QCE} and the Toda lattice equations of
\cite{OP06}, in terms of the functions
$X_d$ of \eqref{eq:Xd-intro}.

\section{The functions $F_{g,n}$ in terms of Gromov-Witten invariants}
\label{sec:Fgn}

The significance of the idea of quantum curves
is that the single equation \eqref{eq:Sch-intro}
captures all information of the topological 
invariants of the theory. The key process
from this single equation to the topological
invariants is the 
\emph{integral form}
of the mechanism known as
the topological recursion of \cite{CEO,EO07}. 
We refer to \cite{DM13,DMSS12,NS11} for
mathematical formulation of the topological
recursion. This section is devoted to 
providing the unique mechanism to 
integrate the topological recursion, for 
the context of the Gromov-Witten theory 
of $\P$.

Let us begin with a solution 
$\W_{g,n}(z_1,\dots,z_n)$
to the topological recursion  of
\cite{CEO,DOSS12,EO07} associated with 
the spectral curve $\Sigma = \bC^*$ defined by
\begin{equation}
\label{eq:xy}
\begin{cases}
x(z)  = z+\frac{1}{z} \\
y(z)  = \log z
\end{cases}.
\end{equation}
This means that 
 symmetric differential forms
 $\W_{g,n}(z_1,\dots,z_n)$ of degree $n$ on
$\Sigma^n$
for $(g,n)$ in the stable range $2g-2+n>0$
are inductively defined by the 
following recursion formula:
\begin{multline}
\label{eq:CEO}
\W_{g,n}(z_1,\dots,z_n)\\
=
\frac{1}{2\pi i} \oint_{z=\pm 1}
\frac{\int_z ^{1/z} \W_{0,2}(\;\cdot\;,z_1)}
{\W_{0,1}(1/z) - \W_{0,1}(z)}
\Bigg[\W_{g-1,n+1}(z,1/z,z_2,\dots,z_n)
\\
+
\sum_{\substack{g_1+g_1=g\\
I\sqcup J=\{2,\dots,n\}}} ^{\text{stable}}
\W_{g_1,|I|+1}(z,z_I)\W_{g_2,|J|+1}
(1/z,z_J)
\Bigg],
\end{multline}
where the residue integral is taken with respect to
the variable $z\in \Sigma$ on 
two small, positively oriented,
closed loops around $z=1$ and $z=-1$,
and for the index set $I\subset \{2,\dots,n\}$,
we denote by
$|I|$  its cardinality, 
and $z_I = (z_i)_{i\in I}$. 
For $(g,n)$ in the unstable range, we define
\begin{align}
\label{eq:W01}
&\W_{0,1}(z) :=y(z) dx(z),\\
\label{eq:W02}
&\W_{0,2}(z_1,z_2) := \frac{dz_1 dz_2}
{(z_1-z_2)^2} -\frac{dx(z_1) dx(z_2)}
{(x(z_1)-x(z_2))^2}.
\end{align}
The goal of this section is to 
derive the  integral $F_{g,n}(z_1,\dots,z_n)$
of $\W_{g,n}(z_1,\dots,z_n)$ in a consistent and
unique way that has the 
$x$-variable expansion \eqref{eq:Fgn-intro}.

\begin{remark}
The second term of the right-hand side of
\eqref{eq:W02} does not
play any role in the topological recursion
\eqref{eq:CEO}. It is included here for 
the consistency of the primitive $F_{0,2}(z_1,z_2)$ 
to be discussed in Section~\ref{sec:shift-of-variable}.
\end{remark}

\begin{definition}
\label{def:Fgn}
For $2g-2+n>0$, we define the
 \emph{primitive} $F_{g,n}(z_1,\dots,z_n)$ 
 of the $n$-form $\W_{g,n}(z_1,\dots,z_n)$ to be
a rational  function on $\Sigma^n$
that satisfies the following conditions:
\begin{align}
\label{eq:primitive}
&d_1\cdots d_n F_{g,n}(z_1,\dots,z_n) 
=\W_{g,n}(z_1,\dots,z_n);\\
\label{eq:skew}
& F_{g,n}(z_1,\dots,z_{i-1},1/z_i,
z_{i+1},\dots,z_n) = -F_{g,n}(z_1,\dots,z_n),
\quad i=1,\dots,n;\\
\label{eq:initial}
&F_{g,n}(z_1,\dots,z_n)\big|_{z_1=\cdots=z_n=0}
=0.
\end{align}
If it exists, then it is unique.
\end{definition}

From now on, we need to relate functions or 
differential forms defined 
on the spectral curve $\Sigma= \bC^*$
of \eqref{eq:xy}
and on the base curve $\bC$. We recall \cite{DMSS12}
that the inverse function of \eqref{eq:LGM}
for the branch near $z=0$ and  $x=\infty$ is given by
the generating function of the Catalan numbers
\begin{equation}
\label{eq:Catalan}
z=z(x) = \sum_{m=0}^\infty \frac{1}{m+1}
\binom{2m}{m} \frac{1}{x^{2m+1}}.
\end{equation}
By abuse of notation, for a function or a differential
form $f(z)$ on $\Sigma$, we denote the pull-back via
\eqref{eq:Catalan}  simply by $f(x) := f(z(x))$.

It is established in \cite{DOSS12, NS11}
that the solution $\W_{g,n}$ of the topological 
recursion has the following $x$-variable expansion
 in terms of the stationary 
 Gromov-Witten invariants of $\P$:
\begin{equation}
\label{eq:NS}
\W_{g,n}(x_1,\dots,x_n)=
\la \prod_{i=1}^n \left(\sum_{b=0}^\infty (b+1)!\, \tau_{b} (\omega)\,\frac{dx_i}{x_i^{b+2}} \right) \ra_{g,n} .
\end{equation}
There is no systematic mechanism to integrate 
this expression to obtain \eqref{eq:Fgn-intro}.
Instead, we establish the following
theorem in this section.

\begin{theorem} 
\label{thm:Fgn} 
For every $(g,n)$ in the
stable sector $2g-2+n>0$, 
there exists a primitive $F_{g,n}(z_1,\dots,z_n)$
in the sense
of Definition~\ref{def:Fgn}, such that
its $x$-variable expansion is given by
\begin{equation}
\label{eq:Fgn}
F_{g,n}(x_1,\dots,x_n)=
\la \prod_{i=1}^n \left(-\frac{\tau_0(1)}{2}-\sum_{b=0}^\infty\frac{b! \tau_{b} (\omega) }{x_i^{b+1}} \right) \ra_{g,n} .
\end{equation}
\end{theorem}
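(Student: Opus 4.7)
My plan is to verify that the explicit formula \eqref{eq:Fgn}, read as the $x$-variable expansion on the branch $z_i = z(x_i)$ of \eqref{eq:Catalan}, defines the unique rational primitive of $\W_{g,n}$ on $\Sigma^n$ satisfying Definition~\ref{def:Fgn}. Condition \eqref{eq:primitive} follows as a formal identity: differentiating the right-hand side of \eqref{eq:Fgn} kills the constants $-\tau_0(1)/2$, while each term $-b!\tau_b(\omega)/x_i^{b+1}$ contributes $(b+1)!\tau_b(\omega)\,dx_i/x_i^{b+2}$, reproducing \eqref{eq:NS} term by term. Condition \eqref{eq:initial} follows from a dimension count: at $z_i=0$ (i.e.\ $x_i=\infty$) only the constant pieces survive, yielding $(-1/2)^n \la \tau_0(1)^n\ra_{g,n}$; but the virtual dimension $2g-2+2d+n$ must match the total insertion degree zero, forcing $d=1-g-n/2<0$ for every stable pair $(g,n)$, whence this correlator vanishes.

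The main obstacle is the skew-symmetry \eqref{eq:skew}. The subtle point is that the naive substitution $x_i=z_i+1/z_i$ into \eqref{eq:Fgn} produces a $\sigma_i$-invariant expression, whereas Definition~\ref{def:Fgn} demands an anti-invariant rational function. The resolution is that \eqref{eq:Fgn} is to be interpreted only as the asymptotic expansion of $F_{g,n}$ along the sheet $z_i=z(x_i)$ of the double cover $\Sigma\to \bC$ near $z_i=0$; the genuine rational function $F_{g,n}$ then picks up the opposite sign when continued to the other sheet. To establish rigorously the existence of such a rational anti-invariant primitive of $\W_{g,n}$ on $\Sigma^n$, I would use the structural properties of $\W_{g,n}$: it is a rational, $\sigma_i$-anti-invariant meromorphic form with poles only at the branch points $z_i=\pm 1$. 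Since $\sigma_i$ acts locally near each of these points as $u\mapsto -u$ in a suitable coordinate, anti-invariance forces all residues to vanish, so that $\W_{g,n}$ is exact in each variable; iterated integration from a $\sigma_i$-fixed base point then produces a rational $\sigma_i$-anti-invariant primitive.

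It then remains to check that this primitive also vanishes at the origin and to match its $x$-expansion with \eqref{eq:Fgn}. Termwise integration of \eqref{eq:NS} already yields the $-\sum_b b!\tau_b(\omega)/x_i^{b+1}$ part automatically, while the constants of integration, viewed as rational functions of the remaining variables, must be identified with the contributions of the $-\tau_0(1)/2$ shifts. I expect this identification to be the main technical step of the proof: it combines (i) the normalization conditions coming from anti-invariance and vanishing at the origin, which uniquely pin down the integration constants, with (ii) the string equation for the Gromov-Witten theory of $\P$, which re-expresses $\tau_0(1)$ insertions in terms of stationary correlators with one fewer marked point. I anticipate an induction on $n$ using both ingredients to reconcile the two expansions and complete the proof.
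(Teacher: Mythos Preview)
Your approach is genuinely different from the paper's, and the difference matters. The paper does \emph{not} build the anti-invariant primitive abstractly and then try to identify its constants. Instead it writes $\W_{g,n}$, via the Givental formalism of \cite{DOSS12}, as a finite sum
\[
\W_{g,n}(z_1,\dots,z_n)=\sum_{\vec d,\vec i}\bigl\langle \bar\tau_{d_1}(\tilde e_{i_1})\cdots\bar\tau_{d_n}(\tilde e_{i_n})\bigr\rangle_g
\prod_{k=1}^n \frac{W^{i_k}_{d_k}(z_k)}{2^{d_k}\sqrt 2}
\]
with \emph{ancestor} correlators as coefficients, and then constructs explicit rational anti-invariant primitives $\theta^i_d(z)$ of each $W^i_d$. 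Replacing $W^i_d$ by $\theta^i_d$ defines $F_{g,n}$ as a manifest rational anti-invariant function. The $x$-expansion of each $\theta^i_d$ is computed directly in terms of the $S$-matrix, and the known ancestor/descendant dictionary then converts the whole expansion into the descendant form \eqref{eq:Fgn}. The $-\tau_0(1)/2$ terms are not guessed and then justified; they drop out of the explicit values $\theta^1_0(0)=\tfrac12$, $\theta^2_0(0)=-\tfrac i2$ together with the $\delta^\nu_1\delta^m_0$ piece of \eqref{eq:thetaexp}.

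Your plan, by contrast, has a real gap at exactly the point you flag as ``the main technical step.'' You propose to pin down the integration constants using anti-invariance plus the string equation and an induction on $n$. Anti-invariance does make the primitive unique, so the constants are determined in principle; but you still have to \emph{compute} them and show they equal $\langle(-\tfrac12\tau_0(1))\prod_{j\ge 2}T_j\rangle$. The string equation rewrites $\tau_0(1)$ insertions as sums of stationary correlators with one index lowered, but it does not by itself produce the specific coefficient $-\tfrac12$, nor does it give you the value of the anti-invariant primitive at $z_i=0$ to compare with. In the paper that value comes from the explicit $\theta^i_0$ and the $S$-matrix; without some analogous input, your induction has nothing to hook onto. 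A smaller point: your verification of \eqref{eq:initial} via $\langle\tau_0(1)^n\rangle_{g,n}=0$ presupposes that the $x$-expansion of the abstract primitive is already known to be \eqref{eq:Fgn}, so it belongs after the identification, not before.
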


\begin{remark}
We need a different treatment for the unstable
primitives $F_{0,1}(z)$ and $F_{0,2}(z_1,z_2)$.
They are calculated in 
Section~\ref{sec:shift-of-variable}.
\end{remark}

The rest of this section is devoted to proving
this theorem. We start with recalling some results 
of \cite{DOSS12}. 
The most important one is  the formula for 
$\W_{g,n}(z_1,\dots,z_n)$ in terms of the
auxiliary  functions $W^i_d(z)$ (defined below) 
with the \emph{ancestor} Gromov-Witten invariants
  as its coefficients.  We will then
  prove the existence of  the anti-symmetric primitives
   of the functions $W^i_d$, and their $x$-expansions. 
   This will then lead us to the proof of
    the above theorem, where we will 
    also utilize  the known relations between the 
    ancestor and the
    descendant Gromov-Witten invariants.

\subsection{Some results from \cite{DOSS12}}

The ancestor Gromov-Witten invariants 
of $\P$ we need
are 
\begin{equation}
\label{eq:ancestor}
\left\langle \prod_{i=1}^n\bar{\tau}_{b_i}(\alpha_i)
\right\rangle^{d}_{g,n}
:=\int_{[\Mbar_{g,n}(\P,d)]^{vir}}\prod_{i=1}^n \bar{\psi}_i^{b_i}ev_i^\ast(\alpha_i),
\end{equation}
where $\bar{\psi}_i$ denotes the pull back of 
the cotangent class on $\Mbar_{g,n}$ by the
natural forgetful morphism
$$
\Mbar_{g,n}(\P,d)\lrar \Mbar_{g,n}.
$$
Since we adopt a quantum field theoretic point of
view in calculating Gromov-Witten invariants,
we often call them \emph{correlators} in this
paper. The ancestor and descendant correlators 
do not agree. We will give a formula to determine
 one from the other in \eqref{eq:anc_desc_rel}.

Let us  define
\begin{align}
W_0^1(z)  &:= \dfrac{dz}{(1-z)^2},\\
W_0^2(z)  &:= \dfrac{i dz}{(1+z)^2},
\\
\label{eq:Wkdef}
W^i_k (z) &:= d\left( \left(-2\frac{d}{d x(z)}\right)^k \int W_0^i (z)\right),\qquad i=1,2; \quad k\ge 0.
\end{align}
Then for $g\geq 0$ and $n\geq 1$
with $2g-2+n>0$, from 
Theorem 4.1 of \cite{DOSS12}  
(as shown in the proof of Theorem 5.2 
of \cite{DOSS12}), we have
\begin{equation}
\label{eq:Wgn_anc}
\W_{g,n}(z_1,\dots,z_n)=\sum_{\vec{d},\;\vec{i}}\cor{\bar{\tau}_{d_1}(\tilde{e}_{i_1})\dots\bar{\tau}_{d_n}(\tilde{e}_{i_n})}_g \dfrac{W_{d_1}^{i_1}(z_1)}{2^{d_1}\sqrt{2}}\dots \dfrac{W_{d_n}^{i_n}(z_n)}{2^{d_n}\sqrt{2}}.
\end{equation}
Here  the sum over $\vec{d}$ and $\vec{i}$ 
are taken over all integer values $0\leq d_k$ and 
 $i_k =1, 2$. Note that
 the coefficients of this expansion
  are the \emph{ancestor} Gromov-Witten invariants. 
  The cohomology basis for $H^1(\P,\bQ)$
  is normalized as follows. First we denote
  by $e_1=1$ and $e_2 = \omega$. 
  Using the normalization matrix 
  \begin{equation*}
A=\dfrac{1}{\sqrt{2}}\,\begin{pmatrix}
1 & -i \\
1 & i
\end{pmatrix},
\end{equation*}
we define
\begin{equation*}
\tilde{e}_i = \left(A^{-1}\right)^\mu_i e_\mu.
\end{equation*}
In this section we
use the Einstein convention and take
 summation over repeated indices.

With the help of the Givental formula, 
Proposition 5.1 of \cite{DOSS12} relates 
 the ancestor and the descendant correlators for 
 $\P$ by
 \begin{equation}
\begin{aligned}
\label{eq:anc_desc_rel}
&\sum_{\vec{d},\;\vec{i}}\cor{\bar{\tau}_{d_1}(\tilde{e}_{i_1})\dots\bar{\tau}_{d_n}(\tilde{e}_{i_n})}_g v^{d_1,i_1}\dots v^{d_n,i_n} \\
= &\sum_{\vec{d},\;\vec{\mu}}\cor{\tau_{d_1}(e_{\mu_1})\dots\tau_{d_n}(e_{\mu_n})}_g t^{d_1,\mu_1}\dots t^{d_n,\mu_n}, 
\end{aligned}
\end{equation}
where $v^{d,i}$ and $t^{d,\mu}$ are formal variables related by the following formula:
\begin{equation}
\label{eq:vtrel}
v^{d,i} = A^i_{\mu}\sum_{m=d}^{\infty}(\cS_{m-d})^{\mu}_{\nu} t^{m,\nu}.
\end{equation}
Here $(\cS_k)^\mu_\nu$ are the matrix elements 
of the Givental $S$-matrix
and defined by
\begin{equation}
\begin{aligned}
\label{eq:S-matrix}
\cS(\zeta^{-1}) = \sum_{k=0}^\infty
\cS_k \zeta^{-k}
&= \Id + \zeta^{-1} \cdot
\left(\begin{matrix}
0 & 0 \\
1 & 0
\end{matrix}\right)
\\ 
&+\sum_{k=1}^\infty
\frac{\zeta^{-2k}}{(k!)^2}
\left(\begin{matrix}
1-2k\left(\frac{1}{1}+\cdots +\frac{1}{k}\right) & 0 \\
0 & 1
\end{matrix}\right)
\\ 
& +\sum_{k=1}^\infty
\frac{\zeta^{-2k-1}}{(k!)^2}
\left(\begin{matrix}
0 & -2\left(\frac{1}{1}+\cdots+ \frac{1}{k}\right) \\
\frac{1}{k+1} & 0
\end{matrix}\right).
\end{aligned}
\end{equation}
In the proof of Theorem 5.2 of \cite{DOSS12} it was 
shown that the $x^{-1}$-expansion of $W^{i}_d(z)$ 
near $z=0$ was given by the following formula:
\begin{equation}
W^{i}_d(z)= 2^d \sqrt{2}\,A^i_{\mu}\sum_{m=d}^{\infty}(\cS_{m-d})^{\mu}_{\nu}\ \delta^{\nu}_{2}\,  (m+1)!\, \dfrac{dx}{x^{m+2}},
\end{equation}
where $\delta^i_j$ is the Kronecker delta symbol.
The above formula, together with formulas \eqref{eq:Wgn_anc}-\eqref{eq:S-matrix}, implies  \eqref{eq:NS}.

The first step of integrating $\W_{g,n}$ is 
to identify a suitable primitive of 
the differential $1$-forms $W_d^i(z)$.

\begin{proposition}
\label{prop:theta}
For given $i=1,2$ and $d\ge 0$, 
there exists a uniquely defined rational
function 
$\theta^i_d(z)$ on $\Sigma$ such that
\begin{align}
\label{eq:theta1cond}
d\theta^{i}_d(z) &= W^{i}_d(z),\\
\theta^i_d(1/z) & =-\,\theta^i_d(z).
\label{eq:theta2cond}
\end{align}
Moreover, the $x^{-1}$-expansion of $\theta^i_d(z)$ near $z=0$ is given by 
\begin{equation}
\label{eq:thetaexp}
\theta^{i}_d (z(x))= 2^d \sqrt{2}\, A^i_{\mu}\sum_{m=d}^{\infty}(\cS_{m-d})^{\mu}_{\nu}\ \left( - \delta^{\nu}_{1}\delta^{m}_{0}\, \dfrac{1}{2} - \delta^{\nu}_{2}\,  m!\, \dfrac{1}{x^{m+1}} \right).
\end{equation}
\end{proposition}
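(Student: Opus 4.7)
My plan is to construct $\theta^i_d$ inductively from a carefully chosen anti-symmetric primitive of $W^i_0$, and then to verify existence, uniqueness, and the stated $x^{-1}$-expansion in turn.

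For the base case $d = 0$, I find explicit anti-symmetric primitives
\[
\theta^1_0(z) = \frac{1+z}{2(1-z)}, \qquad \theta^2_0(z) = \frac{i(z-1)}{2(1+z)},
\]
by taking any primitive of $W^1_0 = dz/(1-z)^2$ and $W^2_0 = i\,dz/(1+z)^2$ and pinning down the free additive constant (equal to $-1/2$ and $i/2$ respectively) so as to enforce $\sigma^\ast \theta^i_0 = -\theta^i_0$, where $\sigma(z) := 1/z$. For $d \ge 1$ I then set
\[
\theta^i_d(z) := \left(-2\,\tfrac{d}{dx(z)}\right)^{\!d}\theta^i_0(z),
\]
which is rational on $\Sigma = \bC^*$ because $d/dx$ acts on rational functions of $z$ via $\tfrac{z^2}{z^2-1}\tfrac{d}{dz}$.

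Property \eqref{eq:theta1cond} is built into the definition \eqref{eq:Wkdef} of $W^i_d$, once $\theta^i_0$ is taken as the specific choice of $\int W^i_0$. For the anti-symmetry \eqref{eq:theta2cond}, the crucial observation is that $x(z) = z + 1/z$ is $\sigma$-invariant; a direct chain-rule computation yields the commutation $(d/dx)\circ\sigma^\ast = \sigma^\ast\circ(d/dx)$ on rational functions, so $d/dx$ preserves the eigenspaces of $\sigma^\ast$. Since $\theta^i_0$ lies in the $(-1)$-eigenspace, so does $\theta^i_d$ for every $d$. Uniqueness then drops out: if $\widetilde\theta^i_d$ also satisfies \eqref{eq:theta1cond} and \eqref{eq:theta2cond}, then $\theta^i_d - \widetilde\theta^i_d$ is a closed rational function on the connected curve $\Sigma$, hence a constant, and that constant is $\sigma$-anti-symmetric, hence zero.

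For the $x^{-1}$-expansion \eqref{eq:thetaexp}, I integrate term-by-term the known expansion of $W^i_d$ displayed just before the proposition, using $\int (m+1)!\,dx/x^{m+2} = -m!/x^{m+1}$. This reproduces the $\delta^\nu_2$-part of \eqref{eq:thetaexp} modulo an $x$-independent constant $c^i_d$, which by construction equals $\theta^i_d(z)\big|_{z=0}$. For $d \ge 1$, since $\theta^i_0$ is regular at $z = 0$ and every application of $-2\,d/dx$ kills any constant term while leaving only terms decaying at infinity, one gets $c^i_d = 0$, matching the absence of a $\delta^m_0$-term (the sum starts at $m = d \ge 1$). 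For $d = 0$ the constant $c^i_0 = \theta^i_0(0)$ must be matched against the $m=0$, $\nu = 1$ contribution, which using $(\cS_0)^\mu_\nu = \delta^\mu_\nu$ equals $-\sqrt{2}\,A^i_1/2$. The conceptual content---existence, anti-symmetry, uniqueness, and the $\delta^\nu_2$ piece of the expansion---is clean; the one place requiring genuinely delicate bookkeeping is this $d = 0$ constant identification, which amounts to a scalar verification involving the normalization matrix $A$ and the conventions of \cite{DOSS12} for the basis $\tilde e_i$ and the Givental $S$-matrix.
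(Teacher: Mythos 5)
Your construction and uniqueness argument coincide with the paper's: the same explicit $\theta^i_0$ (your $\tfrac{1+z}{2(1-z)}$ and $\tfrac{i(z-1)}{2(1+z)}$ are exactly \eqref{eq:theta0def} rewritten), the same forced definition $\theta^i_d=\left(-2\tfrac{d}{dx}\right)^d\theta^i_0$ coming from \eqref{eq:Wkdef} and \eqref{eq:theta1cond}, anti-symmetry from the $\sigma$-invariance of $x$, and uniqueness from the fact that an anti-symmetric constant vanishes. Where you genuinely diverge is the expansion \eqref{eq:thetaexp}. The paper does not integrate the quoted expansion of $W^i_d$; it re-derives the expansion of $\eta^\mu_0=\tfrac{1}{\sqrt2}(A^{-1})^\mu_i\theta^i_0$ from scratch by computing the residues \eqref{eq:etacoefs} directly from the rational functions \eqref{eq:eta0} and matching them against the $S$-matrix entries \eqref{eq:tetacoefs}, and only then applies $(-d/dx)^k$. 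Your route --- term-by-term integration of the displayed $W^i_d$-expansion plus determination of the integration constant as $\theta^i_d(0)$ --- is shorter and legitimate, and your argument that $\theta^i_d(0)=0$ for $d\ge1$ (each application of $\tfrac{d}{dx}=\tfrac{z^2}{z^2-1}\tfrac{d}{dz}$ produces a function vanishing at $z=0$) is correct. The price is that your proof of the coefficients rests entirely on the quoted formula from \cite{DOSS12}, whereas the paper's residue computation independently re-verifies it; that is a reasonable trade, not a flaw.

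The one substantive gap is the step you explicitly defer: the $d=0$ constant identification. This is not harmless bookkeeping --- it is the only place where the normalization conventions for $A$ and $\cS$ actually enter, and you should carry the scalar check out rather than gesture at it. If you do, with the conventions as stated you get $\theta^1_0(0)=\tfrac12$, while the $m=0$, $\nu=1$ term of \eqref{eq:thetaexp} evaluates to $-\sqrt2\,A^1_1/2=-\tfrac12$ (and $\theta^2_0(0)=-\tfrac{i}{2}$ likewise fails to match under either index convention for $A$). So the verification you postponed does not close as written: there is a sign/convention to reconcile between \eqref{eq:theta0def} and \eqref{eq:thetaexp} --- one the paper's own proof also asserts rather than displays when it claims the constant term of $\eta^1_0$ agrees with that of $\tilde\eta^1_0$. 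Until that scalar is pinned down, the $d=0$ case of \eqref{eq:thetaexp}, and hence the $-\tfrac{\tau_0(1)}{2}$ term it feeds into, is not actually established by your argument.
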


\subsection{Proof of Proposition~\ref{prop:theta}}
It is easy to see by direct computation that the 
rational functions
\begin{equation}
\begin{aligned}
\label{eq:theta0def}
\theta_0^1 &:= \frac{1}{1-z}-\frac{1}{2}\\ 
\theta_0^2 &:= -\frac{i}{1+z}+\frac{i}{2}
\end{aligned}
\end{equation}
are the unique solutions of \eqref{eq:theta1cond}
and \eqref{eq:theta2cond} for $d=0$.

Equation \eqref{eq:Wkdef}, together with condition \eqref{eq:theta1cond}, implies that if $\theta^i_d(z)$ exists, then it has to satisfy
\begin{equation}
\label{eq:thetakdef}
\theta^i_d (z) = \left(-2\frac{d}{d x(z)}\right)^d \theta_0^i(z).
\end{equation}
Since $x$ is symmetric under 
the coordinate change $z\longmapsto 1/z$, 
we see that the right-hand side of equation 
\eqref{eq:thetakdef} satisfies  
\eqref{eq:theta2cond}. 
This means that $\theta_d^i(z)$ \emph{defined} by
 \eqref{eq:thetakdef} is, for given $i$ and $d$,
 indeed the unique solution of
  \eqref{eq:theta1cond} and \eqref{eq:theta2cond}.

We denote by $\tilde{\theta}^i_d$
the right-hand side of  \eqref{eq:thetaexp}. 
We wish  to prove that the 
$x^{-1}$-expansion of $\theta^i_d(z)$ 
near $z=0$  is given by $\tilde{\theta}^i_d$. 
Let us introduce the following notation:
\begin{equation}
\eta^\mu_d := \dfrac{1}{2^{d}\,\sqrt{2}}\, \left(A^{-1}\right)^{\mu}_i\, \theta^i_d.
\end{equation}
Then we have
\begin{align}
\label{eq:eta0}
\eta_0 &= \left( \frac{1}{1-z^2}-\frac{1}{2}, \frac{z}{1-z^2} \right),\\ \label{eq:etak}
\eta^\mu_k (z) &= \left(-\frac{d}{d x(z)}\right)^k \eta_0^\mu,
\end{align}
and condition \eqref{eq:thetaexp} becomes equivalent to the condition that the $x^{-1}$-expansion of $\eta^\mu_d$ near $z=0$ is equal to $ \tilde{\eta}^\mu_d$, where
\begin{equation}
\label{eq:etaexp}
\tilde{\eta}^\mu_d := \sum_{m=d}^{\infty}(\cS_{m-d})^{\mu}_{\nu}\ \left( - \delta^{\nu}_{1}\delta^{m}_{0}\, \dfrac{1}{2} - \delta^{\nu}_{2}\,  m!\, \dfrac{1}{x^{m+1}} \right).
\end{equation}
Let us prove formula \eqref{eq:etaexp} for $d=0$. Note that $\cS_0 = \Id$, so for the constant term of $\tilde{\eta}_0$ we have 
\begin{equation}
\left[\dfrac{1}{x^0}\right]\tilde{\eta}^\mu_0 = - \delta^{\mu}_{1} \dfrac{1}{2}.
\end{equation}
It is easy to see from \eqref{eq:eta0} that $\eta_0^i$ has the same constant term at $z=0$.

For $k\geq 1$ we have
\begin{equation}
\begin{aligned}
\label{eq:tetacoefs}
\left[\dfrac{1}{x^{2k-1}}\right]\tilde{\eta}^1_{0} &= -(2k-2)!\,(\cS_{2k-2})^1_2 = 0,\\ 
\left[\dfrac{1}{x^{2k-1}}\right]\tilde{\eta}^2_{0} &=- (2k-2)!\,(\cS_{2k-2})^2_2 = -\dfrac{(2k-2)!}{((k-1)!)^2},\\ 
\left[\dfrac{1}{x^{2k}}\right]\tilde{\eta}^1_{0} &=-(2k-1)!\,(\cS_{2k-1})^1_2 =- \dfrac{(2k-1)!}{k!\, (k-1)!},\\ 
\left[\dfrac{1}{x^{2k}}\right]\tilde{\eta}^2_{0} &= -(2k-1)!\,(\cS_{2k-1})^2_2 = 0 .
\end{aligned}
\end{equation}
For the corresponding coefficients in the 
$x^{-1}$-expansion of $\eta_0^{\mu}$ near $z=0$ we have ($k\geq 1$):
\begin{equation}
\begin{aligned}
\label{eq:etacoefs}
\mathop{\Res}_{z=0}x^{2k-2}(z)\,\eta^1_{0}\, dx(z) &= -\mathop{\Res}_{z=0}z^{-2k}\left(1+z^2\right)^{2k-2}dz = 0,\\ 
\mathop{\Res}_{z=0}x^{2k-2}(z)\,\eta^2_{0}\, dx(z) &=- \mathop{\Res}_{z=0}z^{-2k+1}\left(1+z^2\right)^{2k-2}dz = -\dfrac{(2k-2)!}{((k-1)!)^2},\\ 
\mathop{\Res}_{z=0}x^{2k-1}(z)\,\eta^1_{0}\, dx(z) &=- \mathop{\Res}_{z=0}z^{-2k-1}\left(1+z^2\right)^{2k-1}dz = -\dfrac{(2k-1)!}{k!\, (k-1)!},\\ 
\mathop{\Res}_{z=0}x^{2k-1}(z)\,\eta^2_{0}\, dx(z) &= -\mathop{\Res}_{z=0}z^{-2k}\left(1+z^2\right)^{2k-1}dz = 0.
\end{aligned}
\end{equation}
We see that the coefficients in \eqref{eq:tetacoefs} precisely coincide with the ones in \eqref{eq:etacoefs}. This implies that the 
$x^{-1}$-expansion of $\eta^\mu_0$ is indeed given by $\tilde{\eta}^{\mu}_0$.

By virtue of \eqref{eq:etak}, we see
 that the $x^{-1}$-expansion of $\eta_k^\mu$ near $z=0$ is given by the following formula (for $k\geq 1$):
\begin{multline*}
 \left(-\frac{d}{d x}\right)^k \eta_0^\mu 
=\sum_{m=0}^{\infty}(\cS_{m})^{\mu}_{\nu}\ \left( - \delta^{\nu}_{2}\,  (m+k)!\, \dfrac{1}{x^{m+k}} \right) 
\\
=\sum_{m=d}^{\infty}(\cS_{m-k})^{\mu}_{\nu}\ \left( - \delta^{\nu}_{2}\,  m!\, \dfrac{1}{x^{m+1}} \right).
\end{multline*}
This coincides with the formula for $\tilde{\eta}_k^\mu$ for $k\geq 1$. Thus, we have proved that the $x^{-1}$-expansion of $\eta^\mu_k$ is given by $\tilde{\eta}^{\mu}_k$, which, in turn, implies that Equation \eqref{eq:thetaexp} holds.
This concludes the proof of the proposition.

\subsection{Proof of Theorem~\ref{thm:Fgn}}
Recall Equation \eqref{eq:Wgn_anc} for $\W_{g,n}$:
\begin{equation*}
\W_{g,n}(z_1,\dots,z_n)=\sum_{\vec{d},\;\vec{i}}\cor{\bar{\tau}_{d_1}(\tilde{e}_{i_1})\dots\bar{\tau}_{d_n}(\tilde{e}_{i_n})}_g \dfrac{W_{d_1}^{i_1}(z_1)}{2^{d_1}\sqrt{2}}\cdots \dfrac{W_{d_n}^{i_n}(z_n)}{2^{d_n}\sqrt{2}}.
\end{equation*}
Since we know how to integrate every $W_d^i(z)$,
we simply define
\begin{equation}
\label{eq:Fgn_anc}
F_{g,n}(z_1,\dots,z_n):=\sum_{\vec{d},\;\vec{i}}\cor{\bar{\tau}_{d_1}(\tilde{e}_{i_1})\dots\bar{\tau}_{d_n}(\tilde{e}_{i_n})}_g \dfrac{\theta_{d_1}^{i_1}(z_1)}{2^{d_n}\sqrt{2}}\cdots \dfrac{\theta_{d_n}^{i_n}(z_n)}{2^{d_n}\sqrt{2}}.
\end{equation}
Then
from Proposition \ref{prop:theta},
 we see that \eqref{eq:primitive}
 and \eqref{eq:skew} are automatically 
 satisfied. 
We also know from
 Proposition \ref{prop:theta} 
 that  the $x^{-1}$-expansion of $F_{g,n}$ near $z_1=\dots=z_n=0$ is given by
\begin{multline*}
F_{g,n}(x_1,\dots,x_n)\\ 
=\sum_{\vec{d},\;\vec{i}}\cor{\bar{\tau}_{d_1}(\tilde{e}_{i_1})\dots\bar{\tau}_{d_n}(\tilde{e}_{i_n})}_g
\prod_{k=1}^n A^{i_k}_{\mu_k}\sum_{m=d}^{\infty}(\cS_{m-d})^{\mu_k}_{\nu_k}\ \left( - \delta^{\nu_k}_{1}\delta^{m}_{0}\, \dfrac{1}{2} - \delta^{\nu_k}_{2}\,  m!\, \dfrac{1}{x_k^{m+1}} \right).
\end{multline*}
Using \eqref{eq:anc_desc_rel} and \eqref{eq:vtrel}, we find
\begin{equation}
\begin{aligned}
\label{eq:Fgn_desc}
&F_{g,n}(x_1,\dots,x_n)\\ 
&=\sum_{\vec{d},\vec{i}}\cor{\tau_{d_1}(e_{\mu_1})\dots\tau_{d_n}(e_{\mu_n})}_g \prod_{i=1}^n \left( - \delta^{\mu_i}_{1}\delta^{d_i}_{0}\, \dfrac{1}{2} - \delta^{\mu_i}_{2}\,  d_i!\, \dfrac{1}{x^{d_i+1}} \right)\\ 
&=
\la \prod_{i=1}^n \left(-\frac{\tau_0(1)}{2}-\sum_{b=0}^\infty\frac{b! \tau_{b} (\omega) }{x_i^{b+1}} \right) \ra_{g,n}.
\end{aligned}
\end{equation}
The final condition \eqref{eq:initial} follows from
the fact that $\cor{(\tau_0(1))^n}_{g,n}=0$
for all $g$ and $n$ in the stable range. 
This concludes the proof of the theorem.


\section{The shift of variable simplification}
\label{sec:shift-of-variable}

Let us now turn our attention toward
proving \eqref{eq:QCE} of
Theorem~\ref{thm:main}. In this 
section, as the first step, we establish
a formula for  the wave function
$\Psi(x,\hbar)$ of \eqref{eq:Psi} 
involving only  the stationary Gromov-Witten
invariants. 

Our starting point is 
\begin{multline}
\label{eq:psi-as-pert-sum}
 \log \Psi(x,\hbar) = \frac{1}{\hbar} S_0(x) + S_1(x) 
\\
 \phantom{ =\ } 
+ \sum_{g,d=0}^\infty \sum_{\substack{n=1\\ 2g-2+n>0}}^\infty \frac{\hbar^{2g-2+n}}{n!} 
\la \left(-\frac{\tau_0(1)}{2}-\sum_{b=0}^\infty\frac{b! \tau_{b} (\omega) }{x^{b+1}} \right)^n \ra^d_{g,n} .
\end{multline}
Using the string equation and some earlier results in~\cite{DMSS12},
we shall give an expression
for $\log\Psi(x,\hbar)$ purely in terms of the stationary sector. 
More precisely, we prove the following lemma.

\begin{lemma} 
\label{lem:shift-of-variable}
The function 
$\log \Psi(x,\hbar)$ is a solution to 
 the following
difference 
equation:
\begin{multline}
\label{eq:shift-simplification-precise}
 \exp\left(-\frac{\hbar}{2}\frac{d}{dx}\right) \log \Psi(x,\hbar) = \frac{1}{\hbar}\left(x-x\log x\right)
\\
 \phantom{ =\ } 
+ \sum_{g,d=0}^\infty \sum_{n=1}^\infty \frac{\hbar^{2g-2+n}}{n!} 
\la \left(-\sum_{b=0}^\infty\frac{b! \tau_{b} (\omega) }{x^{b+1}} \right)^n \ra^d_{g,n} .
\end{multline}
\end{lemma}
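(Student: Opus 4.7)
The plan is to convert the $-\tau_0(1)/2$ parts of each insertion in $\log\Psi$ into a shift of the variable $x$, using the string equation. I will set $A(x) := -\sum_{b\geq 0} b!\,\tau_b(\omega)/x^{b+1}$, so that each factor in the stable part of \eqref{eq:psi-as-pert-sum} reads $-\tau_0(1)/2 + A(x)$. The crucial observation driving the argument is that
\begin{equation*}
-\frac{d A}{dx}(x) \;=\; -\sum_{b\geq 0}(b+1)!\,\frac{\tau_b(\omega)}{x^{b+2}}
\end{equation*}
coincides term by term with the insertion one gets by applying the string equation to one factor of $A(x)$ (which replaces a single $\tau_b(\omega)$ by $\tau_{b-1}(\omega)$ while preserving the $b!/x^{b+1}$ coefficient). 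Consequently, whenever the moduli $\Mbar_{g,m}(\P,d)$ is stable, the string equation takes the compact form
\begin{equation*}
\la \tau_0(1)\,A(x)^m\ra^{d}_{g,m+1} \;=\; -\frac{d}{dx}\,\la A(x)^m\ra^{d}_{g,m}.
\end{equation*}

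Given this, the proof proceeds by a standard exponential generating function manipulation. For each fixed genus $g$, I would expand $(-\tau_0(1)/2 + A(x))^n$ by the binomial theorem, sum over $n$ with weight $\hbar^n/n!$, and pull out the $\tau_0(1)^k$ insertions by iterating the identity above to obtain
\begin{equation*}
\sum_{n\geq 0}\frac{\hbar^n}{n!}\,\la \left(-\frac{\tau_0(1)}{2}+A(x)\right)^n\ra_g
\;=\; \sum_{k\geq 0}\frac{(-\hbar/2)^k}{k!}\left(-\frac{d}{dx}\right)^k\la e^{\hbar A(x)}\ra_g
\;=\; \exp\!\left(\frac{\hbar}{2}\frac{d}{dx}\right)\la e^{\hbar A(x)}\ra_g ,
\end{equation*}
where $\la\,\cdot\,\ra_g$ denotes the sum over $d$ of the correlators. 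Weighted by $\hbar^{2g-2}$ and summed over $g$, this will show that the stable part of $\log\Psi$ is $\exp(\tfrac{\hbar}{2}\tfrac{d}{dx})$ applied to the stationary series $\sum_{g,n\geq 1}\hbar^{2g-2+n}/n!\,\la A(x)^n\ra_{g,n}$. Applying $\exp(-\tfrac{\hbar}{2}\tfrac{d}{dx})$ to $\log\Psi$ then cancels the shift and leaves precisely the stationary series, matching the stable portion of the right-hand side of~\eqref{eq:shift-simplification-precise}.

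The main obstacle will be the reconciliation of the unstable strata. First, the string equation fails in the exceptional case $(g,n+1,d)=(0,3,0)$, where $\la\tau_0(1)^2\tau_0(\omega)\ra^{0}_{0,3}=\int_{\P}\omega=1$ is not produced by descent from the unstable $(0,2,0)$ moduli; this yields an anomalous contribution to the $g=0$ portion of the EGF identity above. Second, the $(0,1)$ and $(0,2)$ correlators in $\log\Psi$ are packaged inside $S_0$ and $S_1$ alongside the explicit terms $x-x\log x$ and $-\half\log x$, with the $(0,1)$ summand using only $A(x)$ while the $(0,2)$ summand uses the full $-\tau_0(1)/2+A(x)$. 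The delicate bookkeeping is to verify that the $\hbar$-expansion of the shifted explicit pieces $\frac{1}{\hbar}\left((x-\frac{\hbar}{2})-(x-\frac{\hbar}{2})\log(x-\frac{\hbar}{2})\right)-\half\log(x-\frac{\hbar}{2})$ produces exactly the terms missing from the naive EGF identity, once the $(0,3,0)$ anomaly and the unstable $(0,1)$/$(0,2)$ convention discrepancies are accounted for. This final matching uses the explicit expansions of~\cite{DMSS12} identifying $x-x\log x$ and $-\half\log x$ (under the change of variable~\eqref{eq:Catalan}) with the stationary $(0,1)$ and $(0,2)$ Gromov-Witten invariants of $\P$, so that all anomalous pieces cancel and the identity \eqref{eq:shift-simplification-precise} follows.
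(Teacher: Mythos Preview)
Your proposal is correct and follows essentially the same strategy as the paper. Both arguments rest on the same mechanism: the string equation converts each $\tau_0(1)$ insertion against a product of $\tau_{b_i}(\omega)$'s into a decrement of one $b_i$, which, packaged in the generating function $A(x)$, becomes the operator $-d/dx$; iterating and summing yields the shift $x\mapsto x+\hbar/2$. The paper carries this out by introducing the quantities $\Theta^d_{g,n}$ and computing the coefficient of each stationary correlator $\la\prod_i\tau_{b_i}(\omega)\ra^d_{g,n}$ as an explicit binomial sum that is then identified with the Taylor expansion of $\prod_i b_i!/(x+\hbar/2)^{b_i+1}$, whereas you phrase the same computation compactly via the operator identity $\la\tau_0(1)^kA(x)^m\ra=(-d/dx)^k\la A(x)^m\ra$ summed to $\exp(\tfrac{\hbar}{2}\tfrac{d}{dx})$. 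Your treatment of the unstable contributions likewise mirrors the paper's: the exceptional chain terminating at $\la\tau_0(1)^2\tau_0(\omega)\ra^0_{0,3}=1$ and the explicit $x-x\log x$, $-\half\log x$ pieces in $S_0,S_1$ are exactly what the paper isolates in its separate computation of $\Theta^0_{0,1}$ and in the preliminary subsection on $S_0$ and $S_1$.
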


\subsection{Expansion of $S_0$ and $S_1$}

The functions $S_0(x)$ and $S_1(x)$ 
of \eqref{eq:S0-intro} 
and \eqref{eq:S1-intro} 
are derived from the first steps of the WKB method, that is, they are just imposed by the quantum spectral curve equation. 
In this subsection,
 we represent them in terms of the 
unstable $(0,1)$- and $(0,2)$-Gromov-Witten invariants. 

First let us calculate these functions from 
the WKB approximation \eqref{eq:WKB}.
After taking the semi-classical limit
\eqref{eq:SCL}, 
we can calculate $S_1'(x)$ as follows:
\begin{align*}
&e^{-\frac{1}{\hbar}S_0(x)-S_1(x)}
(e^{\hbar\frac{d}{dx}}+e^{-\hbar\frac{d}{dx}}-x)
 e^{\frac{1}{\hbar}S_0(x)+S_1(x)}
 \\
&=e^{S_0'(x)+\hbar\left(\frac{1}{2}
S_0''(x)+S_1'(x)\right)}
e^{\hbar\frac{d}{dx}}+
e^{-S_0'(x)+\hbar
\left(\frac{1}{2}S_0''(x)-S_1'(x)\right)}
e^{-\hbar\frac{d}{dx}}-x+O(\hbar^2)\\
&=e^{S_0'(x)}\left(1+\hbar
\left(\frac{1}{2}S_0''(x)+S_1'(x)\right)\right)
+e^{-S_0'(x)}\left(1+\hbar
\left(\frac{1}{2}S_0''(x)-S_1'(x)\right)\right)
\\
&\qquad -x 
+O(\hbar^2)\\
&=\hbar\left(\frac{S_0''(x)}{2}
\left(e^{S_0'(x)}+e^{-S_0'(x)}\right)+S_1'(x)\left(e^{S_0'(x)}-e^{-S_0'(x)}\right)
\right)+O(\hbar^2).
\end{align*}
The coefficient of $\hbar$ must vanish, hence we can solve for $S_1'(x)$.  Since $$S_0''(x)=\frac{d}{dx}S_0'(x)=\frac{d}{dx}\log z=\frac{\frac{d}{dz}\log z}{x'(z)}	=\frac{\frac{1}{z}}{1-\frac{1}{z^2}}=\frac{1}{z-\frac{1}{z}},$$
we find
\begin{equation}
\label{eq:S1'}
S_1'(x)=-\frac{1}{2}\frac{1}{z-\frac{1}{z}}\frac{z+\frac{1}{z}}{z-\frac{1}{z}}=-\frac{1}{2}\frac{z(z^2+1)}{(z^2-1)^2}.
\end{equation}
It is proved in~\cite[Equation~(7.9) and Theorem~7.7]{DMSS12} that
\begin{equation}
\begin{aligned}
\label{eq:DMSS-S0}
 \sum_{d=0}^\infty 
\la  \left(-\sum_{b=0}^\infty\frac{b! \tau_{b} (\omega) }{x^{b+1}} \right) \ra^d_{0,1} 
&=
\sum_{d=1}^\infty 
\la  \left(-\frac{(2d-2)! \tau_{2d-2} (\omega) }{x^{2d-1}} \right) \ra^d_{0,1} 
\\  
 &= -2z+\left(z+\frac{1}{z}\right)\log \left(1+z^2\right),
 \end{aligned}
 \end{equation}
and 
\begin{equation}
\label{eq:DMSS-S1}
\sum_{d=0}^\infty 
\la 
\prod_{i=1}^2
 \left(-\sum_{b=0}^\infty\frac{b! \tau_{b} (\omega) }{x_i^{b+1}} \right) \ra^d_{0,2} 
=
-\log\left(1-z_1z_2\right).
\end{equation}
One of the implications of the string equation is 
\begin{equation*}
\la \tau_0(1)\tau_{b+1}(\omega) \ra^{d}_{0,2} = \la \tau_{b}(\omega) \ra^{d}_{0,1}.
\end{equation*}
Using this form of the string equation 
 and Equation~\eqref{eq:DMSS-S0}, we  calculate
  that
  \begin{equation}
\begin{aligned} 
\label{eq:cross}
& \sum_{d=1}^\infty 
\la \left(-\frac{1}{2}\tau_0(1)\right) \left(-\frac{(2d-1)! \tau_{2d-1} (\omega) }{x_i^{2d}} \right) \ra^d_{0,2} 
\\ 
& = \frac{1}{2} \frac{d}{dx} \left(-2z+\left(z+\frac{1}{z}\right)\log \left(1+z^2\right)\right) 
\\ 
& = \frac{1}{2}\log x+\frac{1}{2}\log z.
\end{aligned}
\end{equation}
Note that the only condition we have for 
$S_0(x)$ is that $S'_0(x) = \log z$. Therefore,
if we  define 
\begin{equation*}
S_0(z) := F_{0,1}(z) =\int\W_{0,1}(z)
=\int y(z)dx(z)
\end{equation*}
by formally applying \eqref{eq:Sm-intro}
for $m=0$, and impose 
the skew-symmetry condition \eqref{eq:skew}
to the primitive $F_{0,1}(z)$,
then from \eqref{eq:DMSS-S0} we obtain
\begin{equation}
\begin{aligned}\label{eq:S0-x}
S_0(x)& = \frac{1}{z}-z +\left(z+\frac{1}{z}\right) \log z
\\ 
& = \left(x-x\log x\right) + 
\sum_{d=1}^\infty 
\la  \left(-\frac{(2d-2)! \tau_{2d-2} (\omega) }{x_i^{2d-1}} \right) \ra^d_{0,1} .
\end{aligned}
\end{equation}

The determination of $S_1(x)$ is trickier. 
Morally speaking, if we formally apply 
\eqref{eq:Sm-intro} for $m=1$, then
we obtain
\begin{equation}
\label{eq:S1-z}
S_1(x) = -\half F_{0,2}(z,z)
\end{equation}
for the primitive
\begin{multline}
\label{eq:02primitive}
F_{0,2}(z_1,z_2)
= \int^{z_1}\int^{z_2}
\W_{0,2}(z_1,z_2)
\\
=
 \int^{z_1}\int^{z_2} \left(
 \frac{dz_1dz_2}{(z_1-z_2)^2}
-\frac{dx_1dx_2}{(x_1-x_2)^2}\right)
\\
 =-\log(1-z_1z_2) + f(z_1) + f(z_2) +c.
\end{multline}
Here we are imposing the condition that
$F_{0,2}(z_1,z_2)$ is a symmetric function.
The fact that $F_{0,2}$ is a primitive of
$\W_{0,2}$ does not determine
 the function $f(z)$. Therefore, we are
free to choose $f(z)$ so that the differential
equation \eqref{eq:S1'} holds. Obviously,
we need to choose 
$f(z) = \half \log z$. 
In this way, using \eqref{eq:DMSS-S1}
and \eqref{eq:cross} as well, we obtain
\begin{equation}
\begin{aligned}\label{eq:S1-x}
S_1(x)& = -\frac{1}{2}\log\left(1-z^2\right) +\frac{1}{2}\log z
\\ 
& = -\frac{1}{2}\log x + \frac{1}{2}\sum_{d=0}^\infty 
\la \left(-\frac{\tau_0(1)}{2}-\sum_{b=0}^\infty\frac{b! \tau_{b} (\omega) }{x^{b+1}} \right)^2 \ra^d_{0,2}.
\end{aligned}
\end{equation}

\begin{remark}
This adjustment of the choice of $S_1(x)$
also appears in the Hitchin fibration case
of \cite{DM13}. Still we have one degree of 
freedom for choosing a constant $c$ of
\eqref{eq:02primitive}. It does not matter
to the linear 
quantum curve equation \eqref{eq:QCE},
because the constant term $c$ only affects 
on the overall constant factor of $\Psi$
of \eqref{eq:Psi}.
\end{remark}

\subsection{A new formula for $\log\Psi$} We use Equations~\eqref{eq:S0-x} and~\eqref{eq:S1-x} 
 to rewrite  the formula \eqref{eq:psi-as-pert-sum}
  for $\log\Psi$   in the following way: 
\begin{equation}
\label{eq:newPsi}
 \log\Psi(x,\hbar)=\sum_{g,d=0}^\infty \sum_{n=1}^\infty \frac{\hbar^{2g-2+n}(-1)^n}{n!} \Theta^d_{g,n},
\end{equation}
where
\begin{multline}
\label{eq:Theta-010-def}
\Theta_{0,1}^0  := -x+x\log x + \frac{\hbar}{2}\log x 
\\
+
\sum_{k=2}^\infty \la \tau_0(1)^k\tau_{k-2}(\omega)\ra_{0,k+1}^0 \frac{(-1)^k\hbar^{k}}{2^k k!}\frac{(k-2)!}{x^{k-1}}
\end{multline}
and
\begin{equation}
\label{eq:Theta-gnd-def}
\Theta_{g,n}^d  :=  \sum_{k=0}^\infty \sum_{b_1,\dots,b_n=0}^\infty 
\la \tau_0(1)^k \prod_{i=1}^n \tau_{b_i}(\omega) \ra^d_{g,n+k} \frac{(-1)^k\hbar^{k}}{2^k k!} \frac{\prod_{i=1}^n b_i!}{x^{n+\sum_{i=1}^n b_i}}
.
\end{equation}
It is obvious that for dimensional reasons,
 $\Theta_{0,n}^0=0$ for any $n\geq 2$. 
Lemma~\ref{lem:shift-of-variable} is then a direct corollary to the following statement.

\begin{lemma} \label{lem:Theta} 
The quantities defined in \eqref{eq:Theta-010-def}
and \eqref{eq:Theta-gnd-def} are given by
\begin{align}\label{eq:Theta-010}
\Theta_{0,1}^0 & = -\left(x+\frac{\hbar}{2}\right)+\left(x+\frac{\hbar}{2}\right)\log \left(x+\frac{\hbar}{2}\right); \\ \label{eq:Theta-gnd}
\Theta_{g,n}^d & = \sum_{b_1,\dots,b_n} 
\la \prod_{i=1}^n \tau_{b_i}(\omega) \ra^d_{g,n} \frac{\prod_{i=1}^n b_i!}{\left(x+\frac{\hbar}{2}\right)^{n+\sum_{i=1}^n b_i}},
\end{align}
where in the second equation the sum is taken over all $b_1,\dots,b_n\geq 0$ such that $\sum_{i=1}^n b_i = 2g+2d -2$.
\end{lemma}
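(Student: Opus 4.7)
My plan is to reduce both identities to the string equation for Gromov-Witten theory of $\P$:
\[
\la \tau_0(1)^k\prod_{i=1}^n\tau_{b_i}(\omega)\ra_{g,n+k}^d=\sum_{\substack{k_1+\cdots+k_n=k\\ k_i\geq 0}}\binom{k}{k_1,\ldots,k_n}\la\prod_{i=1}^n\tau_{b_i-k_i}(\omega)\ra_{g,n}^d,
\]
with the convention $\tau_{-j}(\omega)=0$; this follows from $k$ successive applications of the string equation, the multinomial coefficient recording which of the $k$ copies of $\tau_0(1)$ lowers which insertion. First I would establish \eqref{eq:Theta-gnd}: substituting the above into \eqref{eq:Theta-gnd-def}, the multinomial coefficient cancels the $k!$ in the denominator, so the summation factorises over $i$. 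Changing the inner variable to $c_i:=b_i-k_i$ reduces each factor to
\[
\sum_{k_i=0}^\infty\binom{c_i+k_i}{k_i}\!\left(-\frac{\hbar}{2x}\right)^{\!k_i}\!\frac{c_i!}{x^{c_i+1}}=\frac{c_i!}{(x+\hbar/2)^{c_i+1}},
\]
by the binomial series for $(1+t)^{-c_i-1}$. This yields the claimed formula, and the dimension constraint $\sum c_i=2g+2d-2$ emerges automatically from the virtual dimension of $[\Mbar_{g,n}(\P,d)]^{\mathrm{vir}}$.

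For $\Theta_{0,1}^0$ the string equation cannot be iterated all the way, because the $(0,1)$ geometry in degree $0$ is unstable, so I would compute the relevant correlators by hand. Using $\Mbar_{0,k+1}(\P,0)=\Mbar_{0,k+1}\times\P$ together with $ev_{k+1}^*\omega=[\mathrm{pt}]$, one obtains $\la\tau_0(1)^k\tau_{k-2}(\omega)\ra_{0,k+1}^0=\int_{\Mbar_{0,k+1}}\psi_{k+1}^{k-2}$, which equals $1$ by the string equation on $\Mbar_{0,n}$. Substituting these $1$'s into \eqref{eq:Theta-010-def} leaves an explicit power series; expanding $-(x+\hbar/2)+(x+\hbar/2)\log(x+\hbar/2)$ via $\log(x+\hbar/2)=\log x+\log(1+\hbar/(2x))$ and collecting powers of $\hbar$ produces the same series term by term, proving \eqref{eq:Theta-010}.

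The main obstacle I anticipate is bookkeeping the string equation near the unstable range. For $(g,n)\neq(0,1)$ the descent from $\Mbar_{g,n+k}(\P,d)$ to $\Mbar_{g,n}(\P,d)$ passes only through stable intermediate moduli, so there is no issue. The borderline case is $(g,n)=(0,1)$ with $d\geq 1$, where the final reduction is $\Mbar_{0,2}(\P,d)\to\Mbar_{0,1}(\P,d)$; one must check that the target is a well-defined stable-map space (it is, since $-2+1+2d>0$) and that the string equation applies in this form. This is precisely the relation already invoked in the excerpt just before \eqref{eq:cross}, so it is available in the present setting and the remaining manipulations are routine generating-series algebra.
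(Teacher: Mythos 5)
Your proof is correct and follows essentially the same route as the paper: iterate the string equation while tracking the multinomial coefficient $k!/(k_1!\cdots k_n!)$, resum the resulting binomial series to shift $x\mapsto x+\hbar/2$, and treat $\Theta^0_{0,1}$ separately by evaluating the correlators to $1$ and summing the explicit logarithmic series. The only cosmetic difference is that you compute $\la\tau_0(1)^k\tau_{k-2}(\omega)\ra^0_{0,k+1}=1$ geometrically from $\Mbar_{0,k+1}(\P,0)\cong\Mbar_{0,k+1}\times\P$, whereas the paper obtains it by iterating the string equation down to $\la\tau_0(1)^2\tau_0(\omega)\ra_{0,3}=1$.
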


\subsection{Proof of Lemma~\ref{lem:Theta}} Since
the difference between the definitions
\eqref{eq:Theta-010-def}-\eqref{eq:Theta-gnd-def}
and the values 
\eqref{eq:Theta-010}-\eqref{eq:Theta-gnd}
is simply the elimination of $\tau_0(1)$, 
we prove
 Lemma~\ref{lem:Theta} by using the  string equation for the Gromov-Witten invariants of $\P$:
\begin{equation}\label{eq:string-red}
\la \tau_0(1)^{k} \prod_{i=1}^n \tau_{b_i}(\omega) \ra^d_{g,n+k}
= \sum_{\substack{j=1 \\ b_j > 0}}^n 
\la \tau_0(1)^{k-1} \tau_{b_j-1}(\omega) \prod_{\substack{i=1 \\ i\not = j}}^n \tau_{b_i}(\omega) \ra^d_{g,n+k-1},
\end{equation}
where we assume $2g-2+n>1$ and $k>0$. 

First, let us directly compute  $\Theta_{0,1}^0$. Equation~\eqref{eq:string-red} implies that 
\begin{equation}
\la \tau_0(1)^k\tau_{k-2}(\omega)\ra_{0,k+1}^0=\la \tau_0(1)^{k-1}\tau_{k-3}(\omega)\ra_{0,k}^0=
\la \tau_0(1)^2 \tau_0(\omega)\ra_{0,3} =1.
\end{equation}
Therefore,
\begin{multline*}
 \sum_{k=2}^\infty \la \tau_0(1)^k\tau_{k-2}(\omega)\ra_{0,k+1}^0 \frac{(-1)^k\hbar^{k}}{2^k k!}\frac{(k-2)!}{x^{k-1}}
\\ \notag
 =
\sum_{k=2}^\infty \frac{(-1)^k\hbar^{k}}{2^k k!}\frac{(k-2)!}{x^{k-1}}
= \left(x+\frac{\hbar}{2}\right) \log \left( \frac{x+\frac{\hbar}{2}}{x} \right) -\frac{\hbar}{2}.
\end{multline*}
This proves Equation~\eqref{eq:Theta-010}.

The proof of Equation~\eqref{eq:Theta-gnd}
goes as follows. Recall that $g+d>0$ and $n>0$.  Equation~\eqref{eq:string-red} implies that any correlator $\la \tau_0(1)^{k} \prod_{i=1}^n \tau_{b_i}(\omega) \ra^d_{g,n+k}$ can be represented as a linear combination of the correlators $\la \prod_{i=1}^n \tau_{b_i}(\omega) \ra^d_{g,n}$ with $\sum_{i=1}^n b_i = 2g+2d-2$. Moreover, for any $k\geq 0$ and $c_1,\dots,c_n\geq  0$ such that $\sum_{i=1}^n c_i = k$, the coefficient of a particular correlator $\la \prod_{i=1}^n \tau_{b_i}(\omega) \ra^d_{g,n}$ in $\la \tau_0(1)^k \prod_{i=1}^n \tau_{b_i+c_i}(\omega) \ra^d_{g,n+k}$
is equal to 
\begin{equation*}
\frac{k!}{c_1!\cdots c_n!} .
\end{equation*}
Therefore, the total coefficient of 
$\la \prod_{i=1}^n \tau_{b_i}(\omega) \ra^d_{g,n} $ in $\Theta^d_{g,n}$ is equal to
\begin{equation}
\begin{aligned} \label{eq:cf-1}
& \sum_{k=0}^\infty \sum_{\substack{c_1,\dots,c_n=0\\ c_1+\cdots c_n=k}}^\infty 
\frac{(-1)^k\hbar^{k}}{2^k k!} \frac{\prod_{i=1}^n (b_i+c_i)!}{x^{n+\sum_{i=1}^n (b_i+c_i)}}
\frac{k!}{c_1!\cdots c_n!} 
\\ 
& =\frac{\prod_{i=1}^n (b_i)!}{x^{n+\sum_{i=1}^n (b_i)}} \sum_{k=0}^\infty \left( \frac{-\hbar}{2x} \right)^k
\sum_{\substack{c_1,\dots,c_n\ge 0\\ c_1+\cdots +c_n=k}}\prod_{i=1}^n\frac{(b_i+c_i)!}{b_i!c_i!}.
\end{aligned}
\end{equation}
On the other hand, expansion of the coefficient of $\la \prod_{i=1}^n \tau_{b_i}(\omega) \ra^d_{g,n} $ in Equation~\eqref{eq:Theta-gnd} is equal to 
\begin{equation}
\begin{aligned}\label{eq:cf-2}
& \frac{\prod_{i=1}^n (b_i)!}{\left(x+\frac{\hbar}{2}\right)^{n+\sum_{i=1}^n (b_i)}}
= \prod_{i=1}^n \frac{(b_i)!}{\left(x+\frac{\hbar}{2}\right)^{b_i+1}}
\\ 
& = \prod_{i=1}^n \frac{(b_i)!}{\left(x\right)^{b_i+1}} \sum_{c_i=0}^\infty \left(\frac{-\hbar}{2x}\right)^k
\frac{(b_i+c_i)!}{b_i!c_i!}.
\end{aligned}
\end{equation}
Since \eqref{eq:cf-1} and~\eqref{eq:cf-2} 
are identical,
we have proved  Equation~\eqref{eq:Theta-gnd}.
This completes the proof of 
Lemma~\ref{lem:shift-of-variable}.




\section{Reduction to the semi-infinite wedge formalism}
\label{sec:reduction-to-semi-infinite-wedge}

In this Section we represent the formula for 
$\Psi(x,\hbar)$ in terms of the se\-mi-in\-fi\-nite wedge formalism. We use the formula of Okounkov-Pand\-ha\-ri\-pan\-de \cite{OP06}
that relates the stationary sector of the Gromov-Witten invariants
 of $\P$ to the expectation values of the so-called $\E$-operators. In order to include the extra combinatorial factors that we have in the expansion of $\log\Psi(x,\hbar)$, we consider the $\E$-operators with values in formal differential operators.

\subsection{Semi-infinite wedge formalism} 
In this subsection we recall very briefly some basic facts about the semi-infinite wedge formalism. For more details we refer to~\cite{DKOSS13,OP06,SSZ12}.

Let us consider a vector space $V:=\bigoplus_{c=-\infty}^\infty V_c$, where $V_c$ is spanned by the basis vectors 
$\underline{a_1}\wedge \underline{a_2}\wedge \underline{a_3} \wedge \cdots$ such that $a_i\in \Z+1/2$, $i=1,2,\dots$, $a_1>a_2>a_3\dots$, and for all but a finite number of terms we have $a_i=1/2-i+c$.
We denote by $\psi_k$ the operator $\underline{k}\wedge\colon V_c\to V_{c+1}$, and by $\psi_k^*$ the operator $\partial/\partial\underline{k}\colon V_c\to V_{c-1}$. Both are odd operators, and they satisfy the graded commutaion relation $[\psi_i,\psi_i^*]=1$, with all other possible pairwise commutators equal to zero.

We denote by $:\psi_i\psi^*_j:$ the normally ordered product, that is, $:\psi_i\psi^*_j:=\psi_i\psi^*_j$ for $j>0$ and  $:\psi_i\psi^*_j:=-\psi^*_j\psi_i$ for $j<0$. We introduce the operators $\E_n(z)$, $n\in\Z$ as
\begin{equation}
\E_n(z):=\sum_{k\in \Z+1/2} \exp\left(z\left(k-\frac{n}{2}\right)\right) :\psi_{k-r}\psi^*_k: +\frac{\delta_{n0}}{\zeta(z)},
\end{equation}
where $\zeta(z)=\exp(z/2)-\exp(-z/2)$. These operators satisfy the commutation relation 
$[\E_n(z),\E_m(w)]=\zeta(nw-mz)\E_{n+m}(z+w)$.

For any operator $\mathcal{A}=\E_{n_1}(z_1)\cdots\E_{n_m}(z_m)$ we denote by $\lv\mathcal{A}\rv$ the coefficient of the vector $v_\emptyset:=-\underline{1/2}\wedge-\underline{3/2}\wedge-\underline{5/2}\wedge\cdots$ in the basis expansion of $\mathcal{A}v_\emptyset$. If we want to compute a particular correlator $\lv \E_{n_1}(z_1)\cdots\E_{n_m}(z_m)\rv$, 
 first we use the commutation relation for the 
$\E$-operators, and then
appeal to the  simple fact that $\left.\left. \E_n(z)\rv=0$ for $n>0$, $\lv \E_n(z) \right.\right.=0$ for $n<0$, and $\lv \E_0(z_1)\cdots \E_0(z_n)\rv = 1/\left(\zeta(z_1)\cdots \zeta(z_n)\right)$.
In this section
 we are mostly interested in  correlators for the form
 \begin{equation}
 \label{eq:A}
 \lv \cA\rv= 
 \lv\E_1(0)^d \prod_{i=1}^n \E_0(z_i) \E_{-1}(0)^d\rv.
\end{equation}

For the purpose of establishing the results
in \cite{OP06}, Okounkov and 
Pandharipande considered the \emph{disconnected}
version of 
Gromov-Witten invariants and Hurwitz numbers.
The disconnectedness here means we allow
disconnected domain curves mapped to $\P$.
For example, they establish in
\cite[Proposition 3.1, Equation 3.4]{OP06} 
a formula for disconnected stationary 
Gromov-Witten invariants of $\P$, which reads
\begin{equation}
\label{eq:OP3.1}
\sum_{b_1,\dots,b_n\ge -2} 
\la \prod_{i=1}^n \tau_{b_i}(\omega)\ra
 ^{\bullet \;d}
\prod_{i=1}^n x_i ^{b_i+1}
=
\frac{1}{(d!)^2}
\lv\E_1(0)^d \prod_{i=1}^n \E_0(x_i) \E_{-1}(0)^d\rv,
\end{equation}
where $\la \;\;\;\ra^\bullet$ denotes the
disconnected Gromov-Witten invariant.
Counting the number of disconnected domain
curves and connected ones are related simply
by talking the logarithm. Thus we have
\begin{multline*}
\sum_{g=0}^\infty
\sum_{b_1,\dots,b_n\ge -2} 
\la \prod_{i=1}^n \tau_{b_i}(\omega)\ra_{g,n}
 ^{d}
\prod_{i=1}^n x_i ^{b_i+1}
\\
=
\log\left(
\sum_{b_1,\dots,b_n\ge -2} 
\la \prod_{i=1}^n \tau_{b_i}(\omega)\ra
 ^{\bullet \;d}
\prod_{i=1}^n x_i ^{b_i+1}\right).
\end{multline*}
This prompts us to introduce the
\emph{connected} correlator notation, 
corresponding to \eqref{eq:OP3.1}, as follows:
\begin{multline}
\label{eq:connected}
\sum_{g=0}^\infty
\sum_{b_1,\dots,b_n\ge -2} 
\la \prod_{i=1}^n \tau_{b_i}(\omega)\ra_{g,n}
 ^{d}
\prod_{i=1}^n x_i ^{b_i+1}
\\
=
\frac{1}{(d!)^2}
\lv\E_1(0)^d \prod_{i=1}^n \E_0(x_i) 
\E_{-1}(0)^d\rv^\circ.
\end{multline}
The connected correlator is also known 
as the \emph{cumulant} in probability theory,
which is calculate via the inclusion-exclusion 
formula. In general, for an operator $\cA$
of  \eqref{eq:A}, 
we denote by $\lv \mathcal{A} \rv^{\circ}$ the contribution coming from the 
single operator of the form
 $\E_0(\sum_{i=1}^nz_i)$ in the end, after applying
 the commutation relation successively.
 Of course in terms of generating functions,
 this simply means we take the logarithm of 
 the expression.
 See \cite[Definition 2.12, Definition 2.14]{DKOSS13} for more detail.

\subsection{A new formula for $\Psi$}

Noticing that
 $\exp\left(\frac{\hbar}{2}\frac{d}{dx}\right)$
is an automorphism, 
from 
\eqref{eq:shift-simplification-precise}
we find 
\begin{equation*}
\log \Psi(x,\hbar)=\exp\left(\frac{\hbar}{2}\frac{d}{dx}\right) T(x),
\end{equation*}
where
\begin{multline*}
T(x) :=\sum_{g,d=0}^\infty \sum_{n=1}^\infty \frac{\hbar^{2g-2+n}}{n!} \sum_{b_1,\dots,b_n=0}^\infty
\la \prod_{i=1}^n \tau_{b_i}(\omega) \ra^d_{g,n} \prod_{i=1}^n \left(-\frac{b_i!}{x^{b_i+1}}\right)
\\  
+ \frac{1}{\hbar}\la \tau_{-2}(\omega) \ra^0_{0,1} \left(x-x\log x\right).
\end{multline*}
Here we have 
used the convention of~\cite{OP06} that $\la\tau_{-2}(\omega)\ra^0_{0,1}=1$ and 
$\tau_{-1}(\omega) = 0$.
We are now ready to re-write the right-hand side
in terms of expectation values of $\E$-operators.
Corollary~\ref{cor:operator} of the 
 following lemma is the main result of this section.
\begin{lemma}\label{lem:operator} For any $d\geq 0$, $n\geq 1$, $(d,n)\not=(0,1)$, we have
\begin{multline}
\label{eq:dn}
\sum_{g=0}^\infty \hbar^{2g-2+n} \sum_{b_1,\dots,b_n=0}^\infty
\la \prod_{i=1}^n \tau_{b_i}(\omega) \ra^d_{g,n} \prod_{i=1}^n \left(-\frac{b_i!}{x_i^{b_i+1}}\right)
\\ 
 =\frac{1}{(d!)^2\hbar^{2d}} \lv \E_1(0)^d \prod_{i=1}^n \E_0\left(-\hbar \frac{\partial}{\partial x_i}\right)\left(\log x_i\right)\E_{-1}(0)^{-d} \rv^\circ.
\end{multline}
For  $d=0$ and $n=1$, we have
\begin{multline}\label{eq:d0n1}
 \frac{1}{\hbar}\la \tau_{-2}(\omega) \ra^0_{0,1} \left(x-x\log x\right)
+ \sum_{g=1}^\infty \hbar^{2g-1} 
\la \prod_{i=1}^n \tau_{2g-2}(\omega) \ra^0_{g,1} \left(-\frac{(2g-2)!}{x^{2g-1}}\right)
\\ 
 =\lv \E_0\left(-\hbar \frac{d}{dx}\right)\left(\log x\right)\rv^\circ.
\end{multline}
\end{lemma}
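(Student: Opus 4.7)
My plan is to derive both \eqref{eq:dn} and \eqref{eq:d0n1} from the Okounkov--Pandharipande operator formula \eqref{eq:connected} by a single formal substitution. The key calculus observation is that, for $b\ge 0$,
\begin{equation*}
-\frac{b!}{x^{b+1}}=\left(-\frac{\partial}{\partial x}\right)^{b+1}\log x,\qquad\text{equivalently}\qquad \hbar^{b+1}\!\cdot\!\left(-\frac{b!}{x^{b+1}}\right)=D^{b+1}\log x,
\end{equation*}
where $D:=-\hbar\,\partial/\partial x$. This suggests substituting $x_i^{b_i+1}\mapsto D_i^{b_i+1}\log x_i$ term by term in the Laurent expansion of both sides of~\eqref{eq:connected}. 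On the operator side the substitution replaces each $\E_0(x_i)$ by $\E_0(D_i)$ acting on the corresponding $\log x_i$; since the $D_i$ for distinct $i$ commute and each $\E_0(D_i)$ meets only its own $\log x_i$, the substitution extends unambiguously through the expectation bracket.

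The $\hbar$-bookkeeping is handled by the virtual dimension constraint $\sum_{i=1}^n b_i=2g-2+2d$ for $\Mbar_{g,n}(\P,d)$, which yields
\begin{equation*}
\hbar^{2g-2+n}\prod_{i=1}^n\!\left(-\frac{b_i!}{x_i^{b_i+1}}\right)=\hbar^{-2d}\prod_{i=1}^n D_i^{b_i+1}\log x_i
\end{equation*}
for every nonzero correlator. Summing over $g$ and over $b_i\ge 0$ identifies the left-hand side of~\eqref{eq:dn} with $\hbar^{-2d}/(d!)^2$ times the image of the right-hand side of~\eqref{eq:connected} under the substitution, namely $\frac{1}{(d!)^2\hbar^{2d}}\lv\E_1(0)^d\prod_i\E_0(D_i)(\log x_i)\,\E_{-1}(0)^d\rv^\circ$, which is precisely the right-hand side of~\eqref{eq:dn}.

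It remains to reconcile the range $b_i\ge 0$ in \eqref{eq:dn}--\eqref{eq:d0n1} with the range $b_i\ge -2$ in~\eqref{eq:connected}. The $b_i=-1$ terms vanish because $\tau_{-1}(\omega)=0$, and the $b_i=-2$ terms are nonzero only in $(g,n,d)=(0,1,0)$, by virtue of the convention $\la\tau_{-2}(\omega)\ra^0_{0,1}=1$ from~\cite{OP06}. Consequently, for $(d,n)\ne(0,1)$ no extra contribution survives and \eqref{eq:dn} follows directly. For $(d,n)=(0,1)$ the sole surviving boundary term corresponds to the leading pole $z^{-1}$ of $1/\zeta(z)=\lv\E_0(z)\rv$, and under the substitution it becomes $D^{-1}\log x=-\hbar^{-1}(x\log x-x)$, which is exactly the $\hbar^{-1}\la\tau_{-2}(\omega)\ra^0_{0,1}(x-x\log x)$ summand on the left-hand side of~\eqref{eq:d0n1}. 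The main technical obstacle is to make the substitution $z\mapsto D$ inside the $\E$-operators fully rigorous, since the pole of $1/\zeta$ at $z=0$ forces the interpretation of $D^{-1}$ as a formal antiderivative; working coefficient-wise in the $x_i$-Laurent expansions and applying $D_i^m$ to $\log x_i$ term by term resolves this cleanly, while the commutation relation $[\E_n(z),\E_m(w)]=\zeta(nw-mz)\E_{n+m}(z+w)$ is preserved verbatim because it involves only the internal $z$-variables, never the $x_i$.
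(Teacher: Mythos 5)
Your proposal is correct and follows essentially the same route as the paper: both start from the Okounkov--Pandharipande formula \eqref{eq:connected}, replace each $\E_0(x_i)$ by $\E_0(-\hbar\,\partial/\partial x_i)$ acting on $\log x_i$ via the term-by-term identity converting $x^{b+1}$ into $\hbar^{b+1}(-b!/x^{b+1})$ (with $D^{-1}\log x=(x-x\log x)/\hbar$ for the $(d,n)=(0,1)$ boundary term), and use the dimension constraint $\sum_i b_i=2g-2+2d$ to account for the power $\hbar^{2g-2+n}$ after dividing by $\hbar^{2d}$.
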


Here  we denote by $\la\;\;\;\ra^\circ$ the connected expectation value. This means that
after the successive application of the
commutation relation, all differential operators
appear in one correlator.
 Of course for $d=0$, $n=1$, we have
  $\la \E_0\ra^\circ=\la \E_0\ra$.
The following corollary is a straightforward
application of Lemma~\ref{lem:operator}.

\begin{corollary}\label{cor:operator} We have the following expression for $\log\Psi$:
\begin{multline}\label{eq:logpsi}
\log\Psi(x,\hbar)
\\=
 \sum_{d=0}^\infty \frac{1}{\hbar^{2d} (d!)^2}\lv \E_1(0)^d \sum_{n=1}^\infty \frac{\left(\exp\left(\frac{1}{2}\hbar \frac{d}{dx}\right)\E_0\left(-\hbar \frac{d}{dx}\right) \left(\log x\right)\right)^n}{n!} \E_{-1}(0)^d \rv^\circ.
\end{multline}
\end{corollary}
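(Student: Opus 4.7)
The plan is simply to assemble Lemma~\ref{lem:shift-of-variable} and Lemma~\ref{lem:operator}.  First I would invoke Lemma~\ref{lem:shift-of-variable} to write $\log\Psi(x,\hbar)=\exp\bigl(\tfrac{\hbar}{2}\tfrac{d}{dx}\bigr)T(x)$, and then unpack the definition of $T(x)$ into the sum over $(g,d,n)$ of purely stationary correlators $\sum_{b_1,\dots,b_n}\la\prod_i\tau_{b_i}(\omega)\ra^d_{g,n}\prod_i(-b_i!/x^{b_i+1})$, together with the unstable piece $\la\tau_{-2}(\omega)\ra^0_{0,1}(x-x\log x)/\hbar$.

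Next I would apply Lemma~\ref{lem:operator} block by block in $(d,n)$.  For $(d,n)\neq (0,1)$, Equation~\eqref{eq:dn}, after setting $x_1=\cdots=x_n=x$, identifies the corresponding block of $T(x)$ with $\tfrac{1}{(d!)^2\hbar^{2d}\,n!}\lv\E_1(0)^d(\E_0(-\hbar\,d/dx)(\log x))^n\E_{-1}(0)^d\rv^\circ$.  Here I use that the operators $\E_0(z_i)$ commute mutually on the Fock space (since $[\E_0(z_1),\E_0(z_2)]=\zeta(0)\E_0(z_1+z_2)=0$), so that specialising $x_i=x$ collapses the product $\prod_{i=1}^n\E_0(-\hbar\,\partial_{x_i})(\log x_i)$ into the $n$-th power of the single operator-valued function $\E_0(-\hbar\,d/dx)(\log x)$.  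For $(d,n)=(0,1)$, Equation~\eqref{eq:d0n1} handles the combined stable plus unstable contribution: the addition of $(x-x\log x)/\hbar$ to the left-hand side of \eqref{eq:d0n1} is precisely what is needed to make this exceptional case fit the general pattern with $d=0,\ n=1$ (the prefactor $\frac{1}{(d!)^2\hbar^{2d}\,n!}$ reduces to $1$).

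Summing all blocks, the combinatorial factor $1/n!$ from the definition of $T(x)$ now assembles the powers $(\E_0(-\hbar\,d/dx)(\log x))^n$ into the exponential-type series $\sum_{n\geq 1}\frac{1}{n!}(\cdot)^n$ sitting inside the correlator.  At this stage I have a formula of the shape of~\eqref{eq:logpsi} but without the internal shift $\exp\bigl(\tfrac{\hbar}{2}\tfrac{d}{dx}\bigr)$.

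Finally I would push the outer shift into the correlator.  Because the $\E_n(0)$ operators do not depend on $x$, and because the shift distributes over any product of functions of $x$ (it is an algebra automorphism acting only on $x$), it is enough to note that $\exp\bigl(\tfrac{\hbar}{2}\tfrac{d}{dx}\bigr)$ trivially commutes with $\E_0(-\hbar\,d/dx)$: both are differential operators in $x$ with constant coefficients.  Hence the shift passes through $\E_0(-\hbar\,d/dx)$ to act directly on $\log x$, producing exactly the inner factor $\exp\bigl(\tfrac12\hbar\tfrac{d}{dx}\bigr)\E_0\bigl(-\hbar\tfrac{d}{dx}\bigr)(\log x)$ appearing in~\eqref{eq:logpsi}.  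The only delicate bookkeeping point in the whole argument is the compatibility of the $(d,n)=(0,1)$ block with the general pattern, and this is already engineered into the split between the two cases of Lemma~\ref{lem:operator}; beyond that the corollary is a formal rearrangement.
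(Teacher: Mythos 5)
Your proposal is correct and follows essentially the same route as the paper: the paper derives \eqref{eq:logpsi} as a direct assembly of Lemma~\ref{lem:shift-of-variable} and Lemma~\ref{lem:operator}, with the $(d,n)=(0,1)$ case absorbed via \eqref{eq:d0n1} exactly as you describe. The points you flag — the mutual commutativity of the $\E_0$'s when specialising $x_i=x$, and pushing the shift $\exp\bigl(\tfrac{\hbar}{2}\tfrac{d}{dx}\bigr)$ through to act on $\log x$ — are the right (and only) things to check.
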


\subsection{Proof of Lemma~\ref{lem:operator}}

The starting point of the proof is \eqref{eq:connected}.
Note that only negative $b_i$ contribution
comes from 
$\la \tau_{-2}(\omega) \ra^{0}_{0,1}=1$, which is  the coefficient of $x_i^{-1}$ in $\lv \E_0\left(x_i\right)\rv^\circ$.

Let $A(x)=\sum_{i=-1}^{\infty} a_i x^i$ be an arbitrary Laurent series. Observe that
\begin{equation}
\label{eq:observation}
A\left(-\hbar \frac{d}{dx}\right) \left(\log x \right) = a_{-1}\left(\frac{x-x\log x}{\hbar}\right) + a_0\log x - \sum_{i=1}^{\infty} a_i \frac{(i-1)!\hbar^{i}}{x^{i}}.
\end{equation}
We can apply this observation to the correlator 
\begin{equation}
\frac{1}{(d!)^2}\lv \E_1(0)^d \prod_{i=1}^n \E_0\left(x_i\right)\E_{-1}(0)^{-d} \rv ^\circ
\end{equation}
and change $\E_0\left(x_i\right)$ to 
$$
\E_0\left(-\hbar \frac{\partial}{\partial x_i}\right)\log x_i.$$
If $(n,d)\not=(1,0)$, then
we have a formal Laurent series in $x_1,\dots,x_n$, where the degree of each variable in each term is 
less than or equal to 
$-1$. Together with the computation of the degree of $\hbar$, which is $\sum_{i=1}^n (b_i +1) -2d = 2g-2+n$, we establish Equation~\eqref{eq:dn}.

If $(n,d)=(1,0)$, then it is sufficient
 to observe that $\lv \E_0\left(x\right)\rv^\circ=x^{-1} +O(x)$. Thus we have one additional term $(x-x\log x)/\hbar$ as in \eqref{eq:observation}, which is exactly the first term in Equation~\eqref{eq:d0n1}.

This completes the proof of 
Lemma~\ref{lem:operator}, and hence, Corollary~\ref{cor:operator}.




\section{Reduction to a combinatorial problem}
\label{sec:combinatorial}

The expression \eqref{eq:logpsi} of 
$\log \Psi$ in the form of 
the vacuum expectation value of the 
operator product allows us to 
convert the quantum curve equation
\eqref{eq:QCE} into a combinatorial formula.

Our starting point 
is the $\Psi$-function represented in the form 
\begin{equation}
\label{eq:Psi in operator}
\Psi(x,\hbar)=1+\sum_{d=0}^\infty \frac{1}{\hbar^{2d}(d!)^2}\lv \E_1(0)^d \sum_{n=1}^\infty \frac{1}{n!}\A(x)^n \E_{-1}(0)^d \rv^\star,
\end{equation}
where
\begin{equation}
\begin{aligned}
\label{eq:A-operator}
\A(x)& =\exp\left(\frac{\hbar}{2} \frac{d}{dx}\right) \E_0\left(-\hbar \frac{d}{dx}\right) \left(\log x\right) \\ 
& = \sum_{k\in\Z+\frac{1}{2}} \exp\left(\left(-k+\frac{1}{2}\right)\hbar\frac{d}{dx}\right)\left(\log x\right) :\psi_k\psi_k^*: 
\\ 
& \phantom{ =\ }
+ B\left(-\hbar \frac{d}{dx}\right) \left(\frac{x-x\log x}{\hbar}\right).
\end{aligned}
\end{equation}
Here $B(t):= t/(e^t-1)$ in
\eqref{eq:A-operator} is the generating series of the Bernoulli numbers, and the notation $\la - \ra^\star$ 
in \eqref{eq:Psi in operator}
means that in the computation of this expectation value 
using the commutation relations, 
we never allow any $\E_1(0)$ and  $\E_{-1}(0)$ to 
commute directly. We need this requirement since we exponentiate the series~\eqref{eq:logpsi},
which does not
have terms without $\E_0$-operators.
The goal of this section is to prove 
Corollary~\ref{cor:EqX}.

\begin{lemma} \label{lem:Xd} We have
\begin{equation} \label{eq:Xd}
\exp\left(\frac{1}{\hbar^2}\right)\Psi(x,\hbar)= \exp\left(B\left(-\hbar \frac{d}{dx}\right) \left(\frac{x-x\log x}{\hbar}\right)\right) X,
\end{equation}
where $X:=\sum_{d=0}^\infty X_d/\hbar^{2g}$, and $X_d$ is given by
\begin{equation}
\begin{aligned}\label{eq:Comb}
X_d&=\frac{1}{(d!)^2}
\lv \E_1(0)^d \exp\left( \sum_{k\in\Z+\frac{1}{2}} \log \left(x-\left(k-\frac{1}{2}\right)\hbar\right) :\psi_k\psi_k^*:\right) \E_{-1}(0)^d\rv
\\ 
&=\sum_{\lambda\vdash d} 
\left(
\frac{\dim\lambda}{d!}\right)^2 
\prod_{i=1}^\infty \frac{x+(i-\lambda_i)\hbar}{x+i\hbar}
.
\end{aligned}
\end{equation}
\end{lemma}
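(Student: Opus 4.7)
}

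The plan is to split the operator $\A(x)$ from \eqref{eq:A-operator} into a piece that acts as a differential operator on $V$ and a piece that commutes with everything, then evaluate the operator part in the partition basis of the charge-zero sector of $V$. Concretely, using $e^{a \, d/dx}\log x = \log(x+a)$, I would rewrite
\begin{equation*}
\A(x) = \A_{\mathrm{op}}(x) + \A_{\mathrm{sc}}(x), \qquad \A_{\mathrm{op}}(x) = \sum_{k\in\bZ+\frac12} \log\!\left(x-\left(k-\tfrac12\right)\hbar\right):\psi_k\psi_k^*:\!,
\end{equation*}
with $\A_{\mathrm{sc}}(x) = B(-\hbar\, d/dx)\bigl((x-x\log x)/\hbar\bigr)$. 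Since $\A_{\mathrm{sc}}$ is scalar, $\exp(\A) = \exp(\A_{\mathrm{sc}})\exp(\A_{\mathrm{op}})$, and the scalar factor commutes past $\E_{\pm 1}(0)$ and hence pulls outside the correlator, producing the desired prefactor $\exp\bigl(B(-\hbar\, d/dx)(x-x\log x)/\hbar\bigr)$.

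Next I would compute the matrix elements of $\exp(\A_{\mathrm{op}})$ in the partition basis $\{v_\lambda\}$. The operator $:\psi_k\psi_k^*:$ has eigenvalue $+1$ on $v_\lambda$ at $k=\lambda_i-i+\frac12$ and eigenvalue $-1$ at $k=-i+\frac12$ (relative to the vacuum), so
\begin{equation*}
\A_{\mathrm{op}}(x)\, v_\lambda = \left(\sum_{i\ge 1}\log\frac{x+(i-\lambda_i)\hbar}{x+i\hbar}\right) v_\lambda,
\end{equation*}
and therefore $\exp(\A_{\mathrm{op}}(x))$ is diagonal with eigenvalues given by the infinite product in \eqref{eq:Comb}. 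Combined with the standard identity $\E_{-1}(0)^d\, v_\emptyset = \sum_{\lambda\vdash d}\dim\lambda\cdot v_\lambda$ (equivalently, the expansion $p_1^d=\sum_\lambda \dim\lambda\cdot s_\lambda$ of the power sum in Schur functions), and its dual for $\E_1(0)^d$, this immediately yields the second equality in \eqref{eq:Comb}:
\begin{equation*}
X_d = \frac{1}{(d!)^2}\sum_{\lambda\vdash d}(\dim\lambda)^2 \prod_{i\ge 1}\frac{x+(i-\lambda_i)\hbar}{x+i\hbar}.
\end{equation*}

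Finally, I would account for the passage from $\Psi$ (written with the $\star$-correlator and beginning at $n\ge 1$) to the full naive correlator $\lv\,\cdot\,\rv$ that defines $X_d$. The $\star$-convention forbids direct $\E_1(0)\leftrightarrow\E_{-1}(0)$ contractions; the difference between the $\star$- and the naive correlator of $\E_1^d e^{\A_{\mathrm{op}}}\E_{-1}^d$ consists of all possible ways to pull some number $k$ of $\E_1(0)$'s directly past $\E_{-1}(0)$'s, contributing a multiplicative factor built from $\lv\E_1(0)^k\E_{-1}(0)^k\rv=k!$. Summing over $k$ with the prefactor $1/(\hbar^{2d}(d!)^2)$ and reorganising by the split between "direct" and "through-$\A$" contractions, the direct contractions factor into $\sum_{k\ge 0}\frac{1}{\hbar^{2k}\, k!}=\exp(1/\hbar^2)$, while the through-$\A$ part assembles into $X$. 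The "$1+$" at the front of $\Psi$ and the $-1$ coming from $\sum_{n\ge1}\A^n/n!=\exp(\A)-1$ cancel against the $d=0$ bookkeeping. Putting everything together gives \eqref{eq:Xd}.

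The step I expect to be the main obstacle is this last one: making the combinatorial factorisation of the naive correlator into $\exp(1/\hbar^2)\cdot(\text{$\star$-correlator})$ fully rigorous. Everything else — the splitting of $\A$, the diagonalisation of $\A_{\mathrm{op}}$, and the Heisenberg/Schur computation of $\E_{\pm1}(0)^d\, v_\emptyset$ — is either immediate or invokes standard identities in the semi-infinite wedge formalism, but the $\star$/naive reshuffling requires a careful inclusion–exclusion or generating-function argument to ensure no cross terms are missed.
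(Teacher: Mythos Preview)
Your splitting $\A=\A_{\mathrm{op}}+\A_{\mathrm{sc}}$, the factoring out of the scalar piece, and the diagonalisation of $\exp(\A_{\mathrm{op}})$ on the partition basis all match the paper's proof verbatim (the paper writes $A_1,A_2$ for your $\A_{\mathrm{op}},\A_{\mathrm{sc}}$, and your derivation of the second line of \eqref{eq:Comb} is exactly the explanation given after the proof of Corollary~\ref{cor:EqX}). The only substantive difference is in extracting the factor $\exp(1/\hbar^2)$, and here the paper sidesteps precisely the obstacle you flag. Rather than compare the $\star$-correlator of \eqref{eq:Psi in operator} with the naive one by tracking direct $\E_1\leftrightarrow\E_{-1}$ contractions, the paper goes back to Corollary~\ref{cor:operator}, which expresses $\log\Psi$ as a sum over $n\ge1$ of \emph{connected} correlators $\lv\cdots\rv^\circ$. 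Inserting the missing $n=0$ terms into that connected sum is trivial, since $\lv\E_1(0)^d\E_{-1}(0)^d\rv^\circ$ vanishes for $d\ge2$ and equals $1$ for $d=1$; this adds $1/\hbar^2$. Exponentiating the resulting sum of connected correlators then gives, by the standard connected/disconnected relation, the sum of ordinary correlators $\lv\cdots\rv$ on one side and $\exp(1/\hbar^2)\Psi$ on the other. In short, the $\star$-convention was only introduced to package the exponentiation of \eqref{eq:logpsi}, and the clean way to remove it is to return to \eqref{eq:logpsi} rather than to unpick \eqref{eq:Psi in operator} combinatorially; your proposed inclusion--exclusion would be redoing by hand a factorisation that the connected/disconnected formalism delivers for free.
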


\begin{corollary}\label{cor:EqX} The quantum spectral curve equation
\begin{equation*}
\left[\exp\left(\hbar\frac{d}{dx}\right)+\exp\left(-\hbar\frac{d}{dx}\right)-x\right] \Psi(x,\hbar) = 0
\end{equation*}
is equivalent to the following equation for the function $X$:
\begin{equation}
\label{eq:EqX1}
\left[\frac{1}{x+\hbar} \exp\left({\hbar\frac{d}{dx}}\right)+x \exp\left(-{\hbar\frac{d}{dx}}\right)-x\right]X=0.
\end{equation}
\end{corollary}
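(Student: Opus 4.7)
The plan is to derive equation \eqref{eq:EqX1} from the quantum curve equation by conjugating by the prefactor supplied by Lemma~\ref{lem:Xd}. Rewriting the content of the lemma, set
\begin{equation*}
f(x) := B\!\left(-\hbar\tfrac{d}{dx}\right)\!\left(\tfrac{x-x\log x}{\hbar}\right),
\qquad
\exp(1/\hbar^{2})\,\Psi(x,\hbar) = e^{f(x)}\,X.
\end{equation*}
Since $\exp(1/\hbar^{2})$ is a constant in $x$, the quantum curve equation is equivalent to $[e^{\hbar d/dx}+e^{-\hbar d/dx}-x]\bigl(e^{f(x)}X\bigr)=0$, which in turn is equivalent to the conjugated equation
\begin{equation*}
\left[e^{-f(x)}e^{\hbar d/dx}e^{f(x)} + e^{-f(x)}e^{-\hbar d/dx}e^{f(x)} - x\right]X=0.
\end{equation*}
The last term is unchanged because $e^{-f(x)}$ is a function of $x$. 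For the first two terms, using that $e^{\pm\hbar d/dx}$ is the shift by $\pm\hbar$, we get $e^{-f(x)}e^{\pm\hbar d/dx}e^{f(x)} = e^{f(x\pm\hbar)-f(x)}\,e^{\pm\hbar d/dx}$, so the entire problem reduces to computing the scalar factors $e^{f(x\pm\hbar)-f(x)}$.

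The main (and only) computation is thus to evaluate $(e^{\pm\hbar d/dx}-1)f(x)$. Write $T=\hbar\,d/dx$ and $g(x)=(x-x\log x)/\hbar$, so that $g'(x)=-\log(x)/\hbar$ and $f=B(-T)g$. The key algebraic identity is the relation
\begin{equation*}
B(-T) \;=\; e^{T}\,B(T),
\end{equation*}
which follows immediately from $B(t)=t/(e^{t}-1)$ and $e^{-T}-1=-e^{-T}(e^{T}-1)$. Combined with the defining identity $(e^{T}-1)B(T)=T$, it yields
\begin{equation*}
(e^{T}-1)B(-T) = T e^{T},\qquad (e^{-T}-1)B(-T) = -T,
\end{equation*}
so that
\begin{equation*}
f(x+\hbar)-f(x)=\hbar\,g'(x+\hbar)=-\log(x+\hbar),\qquad
f(x-\hbar)-f(x)=-\hbar\,g'(x)=\log x .
\end{equation*}
Exponentiating gives $e^{f(x+\hbar)-f(x)}=1/(x+\hbar)$ and $e^{f(x-\hbar)-f(x)}=x$, which plugged into the conjugated equation produces exactly \eqref{eq:EqX1}.

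There is no real obstacle: the argument is a clean conjugation, and the only place where anything non-trivial happens is the two-line identity $B(-T)=e^{T}B(T)$ that turns the Bernoulli operator into a shift. All steps are manifestly reversible, so the quantum curve equation for $\Psi$ and equation \eqref{eq:EqX1} for $X$ are indeed equivalent. The one small subtlety worth stating explicitly in the write-up is that $f(x)$ and all intermediate expressions are to be interpreted as formal Laurent/logarithmic series in $x$ with coefficients in $\bQ[[\hbar]]$, so that the operator identities above hold termwise and the shift operator is well defined on them.
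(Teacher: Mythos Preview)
Your proof is correct and follows exactly the same approach as the paper: conjugate the quantum curve operator by $e^{A_2}=e^{f(x)}$ from Lemma~\ref{lem:Xd} and show that the three conjugates are $\tfrac{1}{x+\hbar}e^{\hbar d/dx}$, $x\,e^{-\hbar d/dx}$, and $x$. The paper simply asserts that the first two identities are ``obtained by a straightforward computation''; your use of the identity $B(-T)=e^{T}B(T)$ to evaluate $f(x\pm\hbar)-f(x)$ is precisely that computation written out.
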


\begin{proof}[Proof of Lemma~\ref{lem:Xd}] Corollary~\ref{cor:operator} implies that 
\begin{align}
\label{eq:log-psi-infiniwedge}
&\sum_{d=0}^\infty \frac{
\lv \E_1(0)^d 
\exp\left( 
\exp\left(\frac{1}{2}\hbar \frac{d}{dx}\right)\E_0\left(-\hbar \frac{d}{dx}\right) \left(\log x\right)\right)
\E_{-1}(0)^d 
\rv^\circ
}{\hbar^{2d} (d!)^2} 
\\ \notag
& = \log \Psi(x,\hbar) + \frac{1}{\hbar^2} +1.
\end{align}
Indeed, we add terms with $n=0$, and it is easy to see that  
$$\lv \E_1(0)^d \E_{-1}(0)^d \rv^\circ =0,
\qquad d\ge 2,
$$ 
and $\lv \E_1(0) \E_{-1}(0)\rv^\circ=\lv \mathrm{Id} \rv^\circ=1$. Therefore,
\begin{align}
&\exp\left(\frac{1}{\hbar^2}\right)\Psi(x,\hbar) =\\ \notag
& \sum_{d=0}^\infty \frac{
\lv \E_1(0)^d 
\exp\left( 
\exp\left(\frac{1}{2}\hbar \frac{d}{dx}\right)\E_0\left(-\hbar \frac{d}{dx}\right) \left(\log x\right)\right)
\E_{-1}(0)^d 
\rv
}{\hbar^{2d} (d!)^2} .
\end{align}

From the definition of the operator $\E_0$, 
we have
\begin{equation}
\begin{aligned}
\label{eq:E0def}
& \exp\left(\frac{1}{2}\hbar \frac{d}{dx}\right)\E_0\left(-\hbar \frac{d}{dx}\right) \left(\log x\right)
\\ 
& = \exp\left(\frac{1}{2}\hbar \frac{d}{dx}\right) 
\left(
\sum_{k\in \Z+1/2} \log\left(x-k\hbar\right) :\psi_{k}\psi^*_k:  
\right)
\\ 
& \phantom{ =\ }
+\exp\left(\frac{1}{2}\hbar \frac{d}{dx}\right) 
\frac{-\hbar\frac{d}{dx}}{\exp\left(-\frac{1}{2}\hbar \frac{d}{dx}\right)-\exp\left(\frac{1}{2}\hbar \frac{d}{dx}\right)}
\left(\frac{x-x\log x}{\hbar}\right)
\\ 
& = 
\sum_{k\in \Z+1/2} \log\left(x-\left(k-\frac{1}{2}\right)\hbar\right) :\psi_{k}\psi^*_k:  
+
B\left(-\hbar \frac{d}{dx}\right)\left(\frac{x-x\log x}{\hbar}\right).
\end{aligned}
\end{equation}
Now define
\begin{align}
\label{eq:A1}
A_1&=\sum_{k\in \Z+1/2} \log\left(x-\left(k-\frac{1}{2}\right)\hbar\right) :\psi_{k}\psi^*_k: \\
\label{eq:A2}
A_2 &=B\left(-\hbar \frac{d}{dx}\right)\left(\frac{x-x\log x}{\hbar}\right).
\end{align}
Since $A_1$ and $A_2$ commute, we have $\exp(A_1+A_2)=\exp(A_2)\exp(A_1)$. Furthermore, since $A_2$ is a scalar operator, we have
\begin{multline*}
 \sum_{d=0}^\infty \frac{
\lv \E_1(0)^d 
\exp(A_2)\exp(A_1)
\E_{-1}(0)^d 
\rv
}{\hbar^{2d} (d!)^2} 
\\ 
 = \exp(A_2) \sum_{d=0}^\infty \frac{
\lv \E_1(0)^d 
\exp(A_1)
\E_{-1}(0)^d 
\rv
}{\hbar^{2d} (d!)^2}.
\end{multline*}
This is exactly the right-hand side of Equation~\eqref{eq:Xd}.
\end{proof}

\begin{proof}[Proof of Corollary~\ref{cor:EqX}] We just have to show that
\begin{align*}
\exp(-A_2) \exp\left({\hbar\frac{d}{dx}}\right) \exp(A_2)
& = \frac{1}{x+\hbar} \exp\left({\hbar\frac{d}{dx}}\right) ; \\
\exp(-A_2) \exp\left(
{-\hbar\frac{d}{dx}}
\right)
 \exp(A_2)
& = x \exp\left({-\hbar\frac{d}{dx}}\right) ; \\
\exp(-A_2) x \exp(A_2)
& = x.
\end{align*}
The last equality is tautological, and the first two are obtained by a straightforward computation.
\end{proof}

For completeness, let us also explain Equation~\eqref{eq:Comb}.  It is based on several standard facts about the semi-infinite wedge formalism. For any partition $\lambda=(\lambda_1\geq \lambda_2\geq \lambda_3\geq \dots)$ we associate a basis vector $v_\lambda\in V_0$ given by
\begin{equation}
\underline{\left(\lambda_1-\frac{1}{2}\right)} \wedge \underline{\left(\lambda_2-\frac{3}{2}\right)}
\wedge \underline{\left(\lambda_3-\frac{5}{2}\right)}
\wedge\cdots .
\end{equation}
Then, we have  $\E_{-1}(0)^d v_\emptyset = \sum_{\lambda\vdash d} \dim \lambda \cdot v_\lambda$, 
$\lv \E_{1}(0)^d v_\lambda \right.\right.=\dim\lambda$, and the fact that for any constants $a_n$, $n\in\Z+1/2$,  $v_\lambda$ is an eigenvector of the operator $\sum_{n\in\Z+1/2} a_n :\psi_n\psi_n^*:$ with the eigenvalue $\sum_{i=1}^\infty \left(a_{\lambda_i-i+1/2}-a_{-i+1/2}\right)$. Therefore,  $v_\lambda$ is an 
eigenvector of the operator 
\begin{equation}
A_1=\exp\left( \sum_{k\in\Z+\frac{1}{2}} \log \left(x-\left(k-\frac{1}{2}\right)\hbar\right) :\psi_k\psi_k^*:\right) 
\end{equation}
with the eigenvalue
\begin{equation}
\exp\left(
\sum_{i=1}^\infty \log\left(x+(i-\lambda_i)\hbar\right)-\log\left(x+i\hbar\right)
\right)=\prod_{i=1}^\infty \frac{x+(i-\lambda_i)\hbar}{x+i\hbar},
\end{equation}
and the total weight of the vector $v_\lambda$ in $\lv \E_1(0)^d A_1 \E_{-1}(0)^d\rv$ is $\left(\dim \lambda \right)^2$. This implies Equation~\eqref{eq:Comb}.




\section{Key combinatorial argument}
\label{sec:key}

We have shown that the quantum curve equation
\eqref{eq:QCE} is equivalent to a 
combinatorial equation \eqref{eq:EqX1}, 
which is indeed a first-order recursion
equation for $X_d$ of 
\eqref{eq:Comb} with respect to the index $d$.
In this section we prove \eqref{eq:EqX1}.

Let $\lambda\vdash d$ be a partition $\lambda=(\lambda_1\geq \lambda_2\geq ...\geq \lambda_{\ell(\lambda)}>0)$ of $d\geq 1$. We can always append it with $d-\ell(\lambda)$ zeros $\lambda_{\ell(\lambda)+1}:=0,\dots,\lambda_d:=0$
at the end so that we would have a partition of $d$ of  length $d$ with non-negative parts. Throughout this section we  use this convention that a partition of $d$ has length $d$. 

Consider the following sum over all partitions $\lambda=(\lambda_1\geq \lambda_2\geq ...\geq \lambda_d)$ of $d\geq 1$:
\begin{equation}
X_d:=\sum_{\lambda\vdash d} \frac{1}{H_\lambda^2} \prod_{i=1}^d
\frac{x+(i-\lambda_i)\hbar}{x+i\hbar} .
\end{equation}
Here $H_\lambda:=\prod_{ij} h_{ij}$, where $h_{ij}$ is the hook length at the vertex $(ij)$ of the corresponding Young diagram, so that $d!/\prod h_{ij}$ is the dimension of the irreducible representation corresponding to $\lambda$. Or equivalently, it is the number of the standard Young tableaux of this shape. We use the convention 
that $X_0:=1$.

In this Section we prove the following key combinatorial lemma.

\begin{lemma}\label{lm:key-lemma} The series $X:=\sum_{d=0}^\infty X_d/\hbar^{2g}$ satisfies the following equation:
\begin{equation}\label{eq:key-lemma}
\left[\frac{1}{x+\hbar} \exp\left({\hbar\frac{d}{dx}}\right)+x \exp\left(-{\hbar\frac{d}{dx}}\right)-x\right]X=0.
\end{equation}
\end{lemma}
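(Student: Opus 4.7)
The plan is to reduce (6.3) to a first-order difference recursion in $d$, and then prove the recursion using the semi-infinite wedge formula for $X_d$ from Section~5.

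First, I would set $y = x/\hbar$. Each $X_d$ in (6.2) depends only on $y$, because the factors $(x+(i-\lambda_i)\hbar)/(x+i\hbar)$ equal $(y+i-\lambda_i)/(y+i)$. Hence $e^{\pm\hbar\partial_x}$ acts on $X=\sum_d X_d(y)\hbar^{-2d}$ as the shift $y\mapsto y\pm1$. Multiplying (6.3) by $\hbar$ and extracting the coefficient of $\hbar^{-2d}$, the lemma becomes equivalent to the family of identities
\[
y\bigl(X_{d+1}(y)-X_{d+1}(y-1)\bigr)\;=\;\frac{X_d(y+1)}{y+1},\qquad d\ge 0,
\]
with $X_0(y)=1$. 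The base case $d=0$ is immediate from $X_1(y)=y/(y+1)$.

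Second, I would rewrite the eigenvalue in content-product form
\[
P_\lambda(y)\;:=\;\prod_{i\ge 1}\frac{y+i-\lambda_i}{y+i}\;=\;\prod_{(i,j)\in\lambda}\frac{y-(j-i)}{y+1-(j-i)},
\]
by row-wise telescoping (the contents $1-i,2-i,\dots,\lambda_i-i$ of row $i$ collapse to $(y+i-\lambda_i)/(y+i)$). This form makes manifest that $P_\mu(y)=P_\lambda(y)(y-c)/(y+1-c)$ whenever $\mu=\lambda\sqcup\{\square\}$ with content $c$, and combined with the $S_d$ branching rules $\dim\mu=\sum_{\lambda\subset\mu}\dim\lambda$ and $(d+1)\dim\lambda=\sum_{\mu\supset\lambda}\dim\mu$ ties sums over $\mu\vdash d+1$ to sums over $\lambda\vdash d$.

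Third, using the semi-infinite wedge expression $X_d=\tfrac{1}{(d!)^2}\lv\E_1(0)^d\,U(y)\,\E_{-1}(0)^d\rv$ with $U(y)v_\lambda=P_\lambda(y)v_\lambda$, two operator identities would drive the argument: the regularised commutator $[\E_1(0),\E_{-1}(0)]=\mathrm{Id}$, obtained from $[\E_1(z),\E_{-1}(w)]=\zeta(z+w)\E_0(z+w)$ in the confluent limit $z,w\to 0$ (where $\zeta(u)\E_0(u)\to 1$ on any charge eigenvector); and the translation symmetry $\hat T\,U(y)\,\hat T^{-1}=U(y+1)$, where the particle-shift $\hat T$ commutes with $\E_{\pm 1}(0)$ and sends the charge-$0$ vacuum to the charge-$(-1)$ vacuum. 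Together these re-express $X_d(y+1)/(y+1)$ as a charge-$(-1)$ vacuum expectation of the same operator (using that $U(y)$ has eigenvalue $1/(y+1)$ on $v_\emptyset^{(-1)}$), reducing the recursion to
\[
y\,\lv\E_1(0)^{d+1}\bigl(U(y)-U(y-1)\bigr)\E_{-1}(0)^{d+1}\rv^{(0)}\;=\;(d+1)^2\,\lv\E_1(0)^d\,U(y)\,\E_{-1}(0)^d\rv^{(-1)}.
\]

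The main obstacle is the precise bookkeeping of the commutator $[\E_{\pm 1}(0),U(y)]$. On the eigenbasis this commutator produces factors $1/(y+1-c(\square))$ indexed by (un)removable corners; I expect the difference $U(y)-U(y-1)$ to collapse these corner contributions, the prefactor $y$ on the left to pair with them to produce the $1/(y+1)$ needed for the charge-shift identification, and the factor $(d+1)^2$ to emerge naturally from the normalisation of the extra creation/annihilation pair. Getting this telescoping and prefactor matching right, rather than resorting to brute residue matching, is the central technical task of the proof.
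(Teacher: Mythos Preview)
Your reduction of the lemma to the first-order recursion
\[
y\bigl(X_{d+1}(y)-X_{d+1}(y-1)\bigr)=\frac{X_d(y+1)}{y+1},\qquad d\ge 0,
\]
is correct and is exactly Lemma~\ref{lm:key-lemma-d} of the paper (up to the harmless choice $y=x/\hbar$ versus the paper's $y=-x/\hbar$). The base case check is fine.

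From this point on your route diverges from the paper, and the divergence matters. The paper does \emph{not} use the semi-infinite wedge picture to prove the recursion; it proceeds purely combinatorially. Writing $X_d$ via the $g$-functions $g_\lambda(y)=\prod_i(y+\lambda_i-i)$, the key input is Han's identity
\[
\frac{1}{H_\lambda}\bigl(g_\lambda(y+1)-g_\lambda(y)\bigr)=\sum_{\mu\in\lambda\setminus 1}\frac{1}{H_\mu}g_\mu(y),
\]
together with the branching rule $\sum_{\lambda\supset\mu}H_\lambda^{-1}=H_\mu^{-1}$, which yields Corollary~\ref{cor:H-square}. These reduce the recursion to showing that the polynomial
\[
Y_d(y)=\sum_{\lambda\vdash d}\frac{(d-y)g_\lambda(y+1)+(y-1)g_\lambda(y)+g_\lambda(y-1)}{H_\lambda^2}
\]
vanishes identically. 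The paper then closes by induction: a direct evaluation gives $Y_d(d)=0$, and Corollary~\ref{cor:H-square} gives $Y_{d+1}(y+1)-Y_{d+1}(y)=Y_d(y)$, forcing $Y_{d+1}$ constant and hence zero. No operator manipulation is needed.

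Your content-product formula and the branching rules you quote are essentially the raw material of Han's identity (indeed, $P_\mu(y)=P_\lambda(y)\cdot\frac{y-c}{y+1-c}$ for a corner of content $c$ is the box-by-box version of the $g$-function telescoping). So your ``second step'' already contains what is required; the detour through charge-shifted vacua and the commutators $[\E_{\pm1}(0),U(y)]$ is unnecessary, and as written it is incomplete. Your own final paragraph concedes this: the ``telescoping and prefactor matching'' you hope for is precisely the content of Han's lemma plus the $Y_d$-induction, and you have not carried it out. Concretely, $[\E_{1}(0),U(y)]$ on $v_\lambda$ produces a sum over removable corners weighted by $(y+1-c)^{-1}$ times $P_\lambda(y)$, so your proposed operator identity is equivalent to a nontrivial sum over (addable and removable) corners with these rational weights; verifying that this collapses in the claimed way is the whole difficulty, not a bookkeeping detail. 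The charge-$(-1)$ eigenvalue $1/(y+1)$ on $v_\emptyset^{(-1)}$ is correct, but the translation $\hat T$ does not literally commute with normal-ordered sums (there is a scalar anomaly at $k=1/2$), so the line $\hat T U(y)\hat T^{-1}=U(y+1)$ needs a correction factor; without tracking it, the reduction to the charge-$(-1)$ vacuum expectation value does not close.

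In short: your first step matches the paper; after that the paper's combinatorial route (Han's lemma and the $Y_d$-induction) is a complete argument, while your operator route, as proposed, stops at the point where the actual work begins.
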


\begin{proof}
In fact, \eqref{eq:key-lemma} is a direct 
consequence of the following more refined 
statement.

\begin{lemma}\label{lm:key-lemma-d} For any $d\geq 1$ we have
\begin{equation}\label{eq:key-lemma-d}
\frac{1}{x/\hbar+1} \exp\left({\hbar\frac{d}{dx}}\right)X_{d-1}+\left[\frac{x}{\hbar} \exp\left(-{\hbar\frac{d}{dx}}\right)-\frac{x}{\hbar}\right]X_d=0.
\end{equation}
\end{lemma}

Indeed, since $\left[x \exp\left(-{h\frac{d}{dx}}\right)-x\right]X_0=0$, the sum  of Equation~\eqref{eq:key-lemma-d} for all $d\geq 1$ with coefficients $1/\hbar^{2d-1}$ yields Lemma~\ref{lm:key-lemma}.
\end{proof}

To prove Lemma~\ref{lm:key-lemma-d}, we need to recall some standard facts on the hook length formula as well as a recent result of Han~\cite{H10}.

\subsection{Hook lengths and shifted parts of partition}

We use the following result from~\cite{H10}. For a partition $\lambda\vdash d$, $d\geq1$, we define the so-called $g$-function:
\begin{equation}
g_\lambda(y):=\prod_{i=1}^d (y+\lambda_i-i).
\end{equation}
For any $\lambda\vdash d$, $d\geq 1$, we denote by $\lambda\setminus 1$ the set of all partitions of $d-1$ that can be obtained from $\lambda$ (or rather the corresponding Young diagram) by removing one corner of $\lambda$. 

\begin{lemma}[Han~\cite{H10}] For every partition $\lambda$ we have
\begin{equation}
\frac{1}{H_\lambda}\left(g_\lambda(y+1)-g_\lambda(y)\right) 
= \sum_{\mu\in\lambda\setminus 1} \frac{1}{H_\mu} g_\mu(y).
\end{equation}
Here $y$ is a formal variable.
\end{lemma}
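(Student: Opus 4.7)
The plan is to view both sides of the asserted identity as polynomials in $y$ of degree at most $d-1$ and to establish equality by matching their values at $d-1$ well-chosen points together with their leading coefficients. Indeed, two polynomials of degree $\leq d-1$ with the same leading coefficient differ by a polynomial of degree $\leq d-2$, and such a difference with $d-1$ roots must vanish identically.

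Introduce the shifted coordinates $\ell_i := \lambda_i - i$ for $i = 1, \dots, d$, so that $\ell_1 > \ell_2 > \dots > \ell_d$ and $g_\lambda(y) = \prod_{i=1}^d (y + \ell_i)$. The removable corners of $\lambda$ correspond to indices $j$ with $\ell_j \geq \ell_{j+1} + 2$, under the convention $\ell_{d+1} := -(d+1)$. Writing $\mu^{(j)} \in \lambda\setminus 1$ for the partition obtained by removing the corner at row $j$ and padded to length $d-1$, a direct computation yields
\begin{equation*}
g_{\mu^{(j)}}(y) = \begin{cases}(y + \ell_j - 1)\!\prod_{i \leq d-1,\, i \neq j}(y + \ell_i), & j < d, \\[2pt] \prod_{i \leq d-1}(y + \ell_i), & j = d.\end{cases}
\end{equation*}
Each $g_{\mu^{(j)}}$ is thus monic of degree $d-1$, while $g_\lambda(y+1) - g_\lambda(y)$ has degree $d-1$ with leading coefficient $d$. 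The leading coefficient of the LHS is therefore $d/H_\lambda = \dim\lambda /(d-1)!$, and that of the RHS is $\sum_{\mu\in\lambda\setminus 1} \dim\mu /(d-1)!$; these coincide by the branching rule $\dim\lambda = \sum_{\mu\in\lambda\setminus 1}\dim\mu$ for the restriction $S_d \downarrow S_{d-1}$.

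Next, evaluate both sides at $y = -\ell_k$ for $k = 1, \dots, d-1$. Since $g_\lambda(-\ell_k) = 0$, the LHS becomes $H_\lambda^{-1}\prod_{i \neq k}(\ell_i - \ell_k + 1)$. On the RHS, the factor $y + \ell_k$ divides every summand $g_{\mu^{(j)}}$ except the one with $j = k$ (when this $k$ is removable); all other summands thus vanish at $y = -\ell_k$. If $k$ is not removable (so $\lambda_k = \lambda_{k+1}$), the factor $\ell_{k+1} - \ell_k + 1 = 0$ kills the LHS as well, and both sides vanish. Otherwise, the identity reduces to the purely algebraic equality
\begin{equation*}
\frac{H_{\mu^{(k)}}}{H_\lambda} = -\frac{\prod_{i \neq k,\, i < d}(\ell_i - \ell_k)}{\prod_{i \neq k}(\ell_i - \ell_k + 1)}.
\end{equation*}

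The main obstacle is the verification of this ratio identity. I would establish it by inserting the Frame-Robinson-Thrall hook-length formula $H_\lambda = \prod_{i=1}^d (\ell_i + d)!/\prod_{i<j}(\ell_i - \ell_j)$ together with its counterpart for $\mu^{(k)}$ (built from $\ell^\mu_i = \ell_i$ for $i \neq k$ and $\ell^\mu_k = \ell_k - 1$, with index $d$ omitted), and cancelling factorials and Vandermonde factors termwise. The overall sign is delivered by the identity $\ell_d - \ell_k + 1 = -(\ell_k + d - 1)$, which holds whenever $\lambda_d = 0$; this is automatic in the only case where a removable $k < d$ exists (the exceptional case $\lambda = (1^d)$ has no removable $k < d$ and is settled by the leading-coefficient argument alone). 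With this ratio identity in hand, the $d-1$ evaluations together with the matching leading coefficients force the two polynomials of degree $\leq d-1$ to coincide, completing the proof of Han's lemma.
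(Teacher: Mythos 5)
The paper does not actually prove this lemma: it is imported verbatim from Han~\cite{H10} (``We use the following result from~\cite{H10}''), so there is no in-paper argument to compare yours against. Judged on its own, your proposal is a correct, self-contained proof. The interpolation strategy works: both sides are polynomials in $y$ of degree at most $d-1$; the leading coefficients agree exactly by the branching rule $\dim\lambda=\sum_{\mu\in\lambda\setminus 1}\dim\mu$; and at each of the $d-1$ distinct points $y=-\ell_k$, $k=1,\dots,d-1$, the identity reduces either to $0=0$ (when row $k$ carries no removable corner, the factor $\ell_{k+1}-\ell_k+1=0$ kills the left side, and the factor $y+\ell_k$ kills every term on the right) or to the single hook-length ratio you isolate. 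I verified that ratio by the route you indicate: with $H_\lambda=\prod_{i=1}^d(\ell_i+d)!\big/\prod_{i<j}(\ell_i-\ell_j)$ the factorial quotient contributes $(\ell_k+d-1)^{-1}\prod_{i\le d-1}(\ell_i+d)^{-1}$ (using $(\ell_d+d)!=0!=1$, valid since $\lambda_d=0$ whenever a removable $k<d$ exists), the Vandermonde quotient contributes $\prod_{i\le d-1}(\ell_i+d)$ times $\prod_{i\ne k,\,i<d}(\ell_i-\ell_k)\big/\prod_{i\ne k,\,i<d}(\ell_i-\ell_k+1)$ with the signs from the $j>k$ pairs cancelling, and the residual $(\ell_k+d-1)^{-1}$ matches $-1/(\ell_d-\ell_k+1)$ exactly as you claim; the exceptional case $\lambda=(1^d)$ is correctly disposed of by the leading-coefficient match alone. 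The only thing separating this from a finished proof is that the factorial and Vandermonde cancellation -- which you rightly flag as the main obstacle -- is described rather than carried out; it does go through as described, but you should write it out, since that computation is where essentially all the content of the lemma sits.
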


We need the following corollary of this lemma.

\begin{corollary}\label{cor:H-square}
For an integer $d\geq 1$ we have
\begin{equation}
\sum_{\lambda\vdash d+1}\frac{1}{H_\lambda^2}\left(g_\lambda(y+1)-g_\lambda(y)\right) 
= \sum_{\mu\vdash d} \frac{1}{H_\mu^2} g_\mu(y).
\end{equation}
\end{corollary}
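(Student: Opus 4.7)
The plan is to apply Han's lemma termwise to each $\lambda \vdash d+1$ on the left-hand side, with the extra factor $1/H_\lambda$ carried along. This gives
\begin{equation*}
\sum_{\lambda\vdash d+1}\frac{1}{H_\lambda^2}\bigl(g_\lambda(y+1)-g_\lambda(y)\bigr)
= \sum_{\lambda\vdash d+1}\sum_{\mu\in\lambda\setminus 1}\frac{1}{H_\lambda H_\mu}\,g_\mu(y).
\end{equation*}
I would then exchange the order of summation. A partition $\mu\vdash d$ appears in $\lambda\setminus 1$ precisely when $\lambda$ is obtained from $\mu$ by adjoining an outer corner; writing $\mu\cup 1$ for the set of such $\lambda$, the sum becomes
\begin{equation*}
\sum_{\mu\vdash d}\frac{g_\mu(y)}{H_\mu}\Biggl(\sum_{\lambda\in\mu\cup 1}\frac{1}{H_\lambda}\Biggr).
\end{equation*}

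The corollary thus reduces to the single hook-length identity
\begin{equation*}
\sum_{\lambda\in\mu\cup 1}\frac{1}{H_\lambda}=\frac{1}{H_\mu}, \qquad \mu\vdash d.
\end{equation*}
I would prove this via representation theory of symmetric groups. Writing $f_\nu=|\nu|!/H_\nu$ for the dimension of the irreducible $S_{|\nu|}$-module indexed by $\nu$, the classical branching rule asserts that $\mathrm{Res}^{S_{d+1}}_{S_d}V_\lambda = \bigoplus_{\mu\in\lambda\setminus 1} V_\mu$. By Frobenius reciprocity this is equivalent to $\mathrm{Ind}^{S_{d+1}}_{S_d}V_\mu = \bigoplus_{\lambda\in\mu\cup 1} V_\lambda$, and comparing dimensions yields $(d+1)\,f_\mu = \sum_{\lambda\in\mu\cup 1} f_\lambda$. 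Substituting $f_\mu=d!/H_\mu$ and $f_\lambda=(d+1)!/H_\lambda$ and cancelling $(d+1)!$ produces exactly the identity required.

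The entire content of the corollary therefore rests on this ``squared'' refinement of Han's lemma, and I expect the main obstacle to be, beyond correctly indexing the swapped double sum, a clean justification of the hook-length identity. The representation-theoretic route above is the quickest; alternatively, one could give a direct combinatorial proof by comparing hook lengths of $\mu$ with those of each $\lambda\in\mu\cup 1$ and exploiting the standard fact that adding a corner shifts hook lengths in a controlled column-by-column way. Either derivation is a routine application of well-known theory, and composed with Han's lemma it immediately yields Corollary~\ref{cor:H-square}.
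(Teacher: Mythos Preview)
Your proof is correct and follows essentially the same route as the paper: both arguments combine Han's lemma with the hook-length identity $\sum_{\lambda\in\mu\cup 1}1/H_\lambda=1/H_\mu$ via a swap of the double sum, the only cosmetic difference being that the paper runs the computation from the right-hand side to the left while you go left to right. The paper simply recalls the hook-length identity as known, whereas you supply the standard branching-rule derivation, which is a welcome addition but not a substantive divergence.
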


\begin{proof}
We recall that for any $\mu\vdash d$, $d\geq 1$, we have:  
\begin{equation}
\sum_{\substack{\lambda\vdash d+1\\ \lambda\setminus 1 \ni \mu}} \frac{1}{H_\lambda} = \frac{1}{H_\mu}.
\end{equation}
Therefore, 
\begin{align*}
\sum_{\mu\vdash d} \frac{1}{H_\mu^2} g_\mu(y) 
& = \sum_{\mu\vdash d} \frac{1}{H_\mu}  \sum_{\substack{\lambda\vdash d+1\\ \lambda\setminus 1 \ni \mu}} \frac{1}{H_\lambda}  g_\mu(y) \\ 
& = \sum_{\lambda\vdash d+1} \frac{1}{H_\lambda}  \sum_{\substack{\mu\vdash d\\ \mu\in\lambda\setminus 1}} \frac{1}{H_\lambda}  g_\mu(y) \\ 
&=\sum_{\lambda\vdash d+1} \frac{1}{H_\lambda^2}  \left(g_\lambda(y+1)-g_\lambda(y)\right).
\end{align*}
\end{proof}

\subsection{Reformulation of Lemma~\ref{lm:key-lemma-d} in terms of $g$-functions} 

We make the following substitution: $y:=-x/\hbar$. Then we see that
\begin{equation*}
X_d=\sum_{\lambda\vdash d} \frac{1}{H_\lambda^2} 
\frac{g_\lambda(y)}{\prod_{i=1}^d (y-i)}.
\end{equation*}
Moreover,
\begin{equation}
\begin{aligned}\label{eq:x-y}
& \frac{1}{x/\hbar+1} \exp\left({\hbar\frac{d}{dx}}\right)X_{d-1}+\left[\frac{x}{\hbar} \exp\left(-{\hbar\frac{d}{dx}}\right)-\frac{x}{\hbar}\right]X_d
\\  &
= \frac{-1}{y-1} \exp\left(-\frac{d}{dy}\right)X_{d-1}+\left[-y\exp\left(\frac{d}{dy}\right)+y\right]X_d.
\end{aligned}
\end{equation}
Observe that
\begin{align} \label{eq:d-1}
 \frac{-1}{y-1} \exp\left(-\frac{d}{dy}\right)X_{d-1} 
 &= -\sum_{\lambda\vdash d-1} \frac{1}{H_\lambda^2} 
\frac{g_\lambda(y-1)}{\prod_{i=1}^d (y-i)}; 
\\
\notag
 -y\exp\left(\frac{d}{dy}\right) X_d &= (d-y)\sum_{\lambda\vdash d} \frac{1}{H_\lambda^2} 
\frac{g_\lambda(y+1)}{\prod_{i=1}^d (y-i)}; 
\\
\notag
 y X_d &= y \sum_{\lambda\vdash d} \frac{1}{H_\lambda^2} 
\frac{g_\lambda(y)}{\prod_{i=1}^d (y-i)}.
\end{align}
Using Corollary~\ref{cor:H-square} we can rewrite the right hand side of Equation~\eqref{eq:d-1} as 
\begin{equation}
\frac{-1}{y-1} \exp\left(-\frac{d}{dy}\right)X_{d-1} = \sum_{\lambda\vdash d} \frac{1}{H_\lambda^2} 
\frac{g_\lambda(y-1) - g_\lambda(y)}{\prod_{i=1}^d (y-i)}.
\end{equation}
Therefore, the right hand side of Equation~\eqref{eq:x-y} is equal to
\begin{equation}
\frac{Y_d(y)}{\prod_{i=1}^d (y-i)},
\end{equation}
where
\begin{equation}
Y_d(y):=\sum_{\lambda\vdash d} \frac{(d-y)g_\lambda(y+1)+(y-1)g_\lambda(y)+g_\lambda(y-1)}{H_\lambda^2}.
\end{equation}
Note that $Y_d(y)$ is a polynomial in $y$ of degree $\leq d+1$, and Lemma~\ref{lm:key-lemma-d} is equivalent to the following statement:

\begin{lemma}\label{lm:key-lemma-Y} For any $d\geq 1$ we have $Y_d(y)\equiv 0$.
\end{lemma}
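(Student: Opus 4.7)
The plan is to reduce $Y_d \equiv 0$ via Han's identity (the unlabelled lemma stated just above Corollary~\ref{cor:H-square}) to an identity on the partition sum $Z_d(y) := \sum_{\lambda \vdash d} g_\lambda(y)/H_\lambda^2$, then to derive that identity from a stronger \emph{per-partition} version, which in turn can be established by residue calculus on $\bP^1$.

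First, for each $\lambda \vdash d$ Han's identity gives
$g_\lambda(y+1) = g_\lambda(y) + H_\lambda \sum_{\nu \in \lambda \setminus 1} g_\nu(y)/H_\nu$ and $g_\lambda(y-1) = g_\lambda(y) - H_\lambda \sum_{\nu \in \lambda \setminus 1} g_\nu(y-1)/H_\nu$. Substituting these into the three-term sum defining $Y_d(y)$, summing over $\lambda$ with weight $1/H_\lambda^2$, and collapsing the resulting double sum using the branching identity $\sum_{\lambda \supset \nu,\, \lambda \vdash d} 1/H_\lambda = 1/H_\nu$ (a consequence of $\sum_{\lambda \supset \nu} \dim \lambda = d\,\dim\nu$) yields
$$ Y_d(y) = d\, Z_d(y) + (d-y)\, Z_{d-1}(y) - Z_{d-1}(y-1). $$
Hence Lemma~\ref{lm:key-lemma-Y} is equivalent to $d\, Z_d(y) = Z_{d-1}(y-1) + (y-d)\, Z_{d-1}(y)$, which in turn would follow from the stronger per-$\mu$ identity
$$ \sum_{\lambda \supset \mu,\, \lambda \vdash d} \frac{g_\lambda(y)}{H_\lambda} = \frac{g_\mu(y-1) + (y-d)\, g_\mu(y)}{H_\mu} \qquad (\mu \vdash d-1), $$
by weighting with $1/H_\mu$, summing over $\mu$, and using the dual branching relation $d/H_\lambda = \sum_{\nu \in \lambda \setminus 1} 1/H_\nu$.

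To prove the per-$\mu$ identity I switch to Maya coordinates: pad $\mu$ to length $d$ (so $\mu_d = 0$) and set $\tilde h_i := \mu_i + d - i$, obtaining $d$ distinct nonnegative integers with $\tilde h_d = 0$. A partition $\lambda \supset \mu$ of size $d$ then corresponds to incrementing some $\tilde h_\ell$ to $\tilde h_\ell + 1$; the Maya-coordinate form of the hook-length formula gives
$$ \frac{H_\mu}{H_\lambda} = \frac{1}{\tilde h_\ell + 1} \prod_{j \neq \ell} \frac{\tilde h_\ell + 1 - \tilde h_j}{\tilde h_\ell - \tilde h_j}, \qquad g_\lambda(y) = \tilde g_\mu(y) \cdot \frac{\tilde y + \tilde h_\ell + 1}{\tilde y + \tilde h_\ell}, $$
where $\tilde y := y - d$ and $\tilde g_\mu(y) := (y-d)\, g_\mu(y) = \prod_i (\tilde y + \tilde h_i)$. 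Non-corner positions contribute zero automatically, since there $\tilde h_\ell + 1 = \tilde h_{\ell-1}$ kills the product; this lets us extend the sum to all $\ell = 1, \ldots, d$.

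Finally, use the partial-fraction splitting
$(\tilde y + \tilde h_\ell + 1)/[(\tilde h_\ell + 1)(\tilde y + \tilde h_\ell)] = 1/(\tilde h_\ell + 1) + 1/[(\tilde h_\ell + 1)(\tilde y + \tilde h_\ell)]$
to recognise each of the two resulting sums as a sum of residues at $z = \tilde h_\ell$ of the rational functions $R_1(z) = P(z+1)/[(z+1)P(z)]$ and $R_2(z) = P(z+1)/[(z+1)(\tilde y + z)P(z)]$, where $P(z) := \prod_i (z - \tilde h_i)$. In both cases the residue at $z = -1$ vanishes because $P(0) = 0$ (since $\tilde h_d = 0$), so by the residue theorem on $\bP^1$ the finite-pole sums equal $-\operatorname{Res}_\infty R_1 = 1$ for $R_1$ and $-\operatorname{Res}_{-\tilde y} R_2 = g_\mu(y-1)/\tilde g_\mu(y)$ for $R_2$ (the latter computed via $P(-\tilde y) = (-1)^d \tilde g_\mu(y)$ and $P(1 - \tilde y) = (-1)^d (\tilde y - 1) g_\mu(y-1)$). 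These combine to the right-hand side of the per-$\mu$ identity. The main obstacle I anticipate is the Maya-coordinate bookkeeping to derive the hook-ratio formula uniformly in $\ell$ and check the vanishing at non-corner positions; once those formulae are in hand, the residue argument is clean and short.
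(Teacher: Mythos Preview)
Your proof is correct, and it proceeds along a genuinely different route from the paper's.

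The paper argues by induction on $d$: it verifies directly that $Y_d(d)=0$ for every $d$, checks $Y_1\equiv 0$ by hand, and then uses Corollary~\ref{cor:H-square} to show $Y_d(y)=Y_{d+1}(y+1)-Y_{d+1}(y)$; if $Y_d\equiv 0$ this forces $Y_{d+1}$ to be constant, hence zero by the root already found. Your argument instead rewrites $Y_d$ in terms of $Z_d(y)=\sum_{\lambda\vdash d}g_\lambda(y)/H_\lambda^2$ and reduces everything to the \emph{per-partition} ``going up'' identity
\[
\sum_{\lambda\supset\mu,\ \lambda\vdash d}\frac{g_\lambda(y)}{H_\lambda}
=\frac{g_\mu(y-1)+(y-d)\,g_\mu(y)}{H_\mu}\qquad(\mu\vdash d-1),
\]
which you then prove directly by a residue computation in Maya coordinates. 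This is a natural dual to Han's ``going down'' identity and is of independent interest; once it is in hand, no induction is needed. The paper's proof is shorter and uses only Corollary~\ref{cor:H-square} together with one explicit evaluation; your proof is longer in bookkeeping but yields a finer statement and replaces the inductive step by a clean application of the residue theorem on $\bP^1$. Both are complete; the choice is a matter of taste.
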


\subsection{Proof of Lemma~\ref{lm:key-lemma-Y}} In this subsection we prove Lemma~\ref{lm:key-lemma-Y} and, therefore, Lemmas~\ref{lm:key-lemma-d} and~\ref{lm:key-lemma}.

First of all, it is easy to check that for any $d\geq 1$ the polynomial $Y_d(y)$ has at least one root. Namely, 
\begin{equation}
\label{eq:Ydd}
Y_d(d)=\sum_{\lambda\vdash d} \frac{(d-1)g_\lambda(d)+g_\lambda(d-1)}{H_\lambda^2} = 0.
\end{equation}
Indeed, $g_\lambda(d)$ is not equal to zero only for $\lambda=(1,1,\dots,1)$. In this case $g_\lambda(d)=d!$, $H_\lambda= d!$, and $(d-1)g_\lambda(d)/H_\lambda^2 = (d-1)/d!$.
Notice that $g_\lambda(d-1)$ does not vanish only for $\lambda=(2,1,1,\dots,1,0)$. In this case
$g_\lambda(d-1)=-d\cdot (d-2)!$, $H_\lambda=d\cdot (d-2)!$, and $g_\lambda(d-1)/H_\lambda^2=-(d-1)/d!$. Thus we see that  always $Y_d(d)=0$,
establishing \eqref{eq:Ydd}.

Now we proceed by induction. It is easy to check that $Y_1(y)\equiv 0$.
Assume that we know that $Y_d(y)\equiv 0$. Corollory~\ref{cor:H-square} then implies that 
\begin{align*}
 Y_d(y) 
&=\sum_{\lambda\vdash d} \frac{(d-y)g_\lambda(y+1)+(y-1)g_\lambda(y)+g_\lambda(y-1)}{H_\lambda^2}
\\ 
&= \sum_{\lambda\vdash d+1} \frac{(d-y)g_\lambda(y+2)+(2y-d-1)g_\lambda(y+1)+ (2-y)g_\lambda(y)-g_\lambda(y-1)}{H_\lambda^2}
\\ 
& = \sum_{\lambda\vdash d+1} \frac{((d+1)-(y+1))g_\lambda(y+2)+((y+1)-1)g_\lambda(y+1)+g_\lambda(y)}{H_\lambda^2}
\\ 
& \phantom{ =\  } 
- \sum_{\lambda\vdash d+1} \frac{((d+1)-y)g_\lambda(y+1)+ (y-1)g_\lambda(y)+g_\lambda(y-1)}{H_\lambda^2}
\\ 
& = Y_{d+1}(y+1)-Y_{d+1}(y).
\end{align*}
By assumption, we have $Y_d(y)\equiv 0$.
Therefore, 
 $Y_{d+1}(y+1)=Y_{d+1}(y)$ for any $y$. Hence$Y_{d+1}$ is constant. Since we have shown that $Y_{d+1}(d+1)=0$, we conclude that $Y_{d+1}\equiv 0$.

This completes the 
proof of Lemmas~\ref{lm:key-lemma-Y},~\ref{lm:key-lemma-d}, and~\ref{lm:key-lemma}.
Thus we have established the main theorem of 
this paper.

\section{Laguerre polynomials}
\label{sec:Laguerre}

In this section we prove a  combinatorial expression for the functions
\begin{equation*}
X_d:=\sum_{\lambda\vdash d} \frac{1}{H_\lambda^2} \prod_{i=1}^d
\frac{x+(i-\lambda_i)\hbar}{x+i\hbar}
\end{equation*}
in terms of the \emph{Laguerre polynomials} $L_n^{(\alpha)}(z)$.

The Laguerre
 polynomial is a solution of the 
 differential equation
\begin{equation*}
z\frac{d^2}{dz^2}L_n^{(\alpha)}(z)+(\alpha+1-z)\frac{d}{dz}L_n^{(\alpha)}(z)+nL_n^{(\alpha)}(z)=0,
\end{equation*}
and has a closed expression 
\begin{equation*}
L_n^{(\alpha)}(z)=\sum_{i=0}^n (-1)^i \binom{n+a}{n-i} \frac{z^i}{i!}.
\end{equation*}
Here is a list of some properties of the Laguerre polynomials:
\begin{align} \label{eq:Eric}
\frac{L_n^{\alpha}(z)}{\binom{n+\alpha}{n}} &
= 1-\sum_{j=1}^n \frac{z^j}{a+j} \frac{L_{n-j}^{(j)}(z)}{(j-1)!}
; \\ \label{eq:nna}
nL_n^{\alpha}(z) &
= (n+\alpha)L_{n-1}^{\alpha}(z) - z L_{n-1}^{\alpha+1}(z)
; \\ \label{eq:diff}
L_n^{\alpha}(z) &
= L_{n}^{\alpha+1}(z) - L_{n-1}^{\alpha+1}(z)
; \\ \label{eq:binom}
\binom{n+\alpha}{n} &
= \sum_{i=0}^n \frac{z^i}{i!}L_{n-i}^{(\alpha+i)}(z).
\end{align}

\begin{proposition} \label{prop:Laguerre}For any $d\geq 0$ we have:
\begin{equation*}
X_d=\frac{1}{d!}\left(1-\sum_{m=1}^d \frac{1}{(m-1)!} L_{d-m}^{(m)}(1) \frac{\hbar}{x+m\hbar}\right).
\end{equation*}
\end{proposition}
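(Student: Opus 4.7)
The plan is to verify by induction that the claimed closed form satisfies the very recursion that characterizes $X_d$. Lemma~\ref{lm:key-lemma-d}, after dividing by $\hbar$ and writing $y := x/\hbar$, becomes
\[
y\bigl[X_d(y) - X_d(y-1)\bigr] \;=\; \frac{1}{y+1}\,X_{d-1}(y+1),
\]
while from the definition each $X_d$ is a rational function in $y$ whose only poles are simple and located at $y = -1,\dots,-d$, with $X_d \to 1/d!$ as $y \to \infty$ (using $\sum_{\lambda\vdash d}(\dim\lambda/d!)^2 = 1/d!$). Together with $X_0 = 1$, these properties pin $X_d$ down uniquely---the recursion prescribes all residues in terms of $X_{d-1}$, and the asymptotic fixes the constant. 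It therefore suffices to let $\tilde X_d$ denote the right-hand side of the proposition and check the same three properties for $\tilde X_d$; the equality $\tilde X_d = X_d$ then follows by induction on $d$.

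Setting $f_d(y) := d!\,\tilde X_d(y)$, the recursion to verify reads
\[
y\bigl[f_d(y) - f_d(y-1)\bigr] \;=\; \frac{d}{y+1}\,f_{d-1}(y+1).
\]
Both sides are rational functions of $y$ with simple poles at $y = -1,\dots,-d$ that vanish at infinity, so the check reduces to matching residues at each pole. Using the elementary splittings
\[
\frac{y}{(y+m-1)(y+m)} = \frac{m}{y+m} - \frac{m-1}{y+m-1},
\qquad
\frac{1}{(y+1)(y+m+1)} = \frac{1}{m}\left[\frac{1}{y+1} - \frac{1}{y+m+1}\right],
\]
I would expand both sides in partial fractions. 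For $2\le k \le d-1$, the match of the coefficient of $1/(y+k)$ reduces to the single Laguerre identity
\[
k\,L_{d-k}^{(k)}(1) - L_{d-k-1}^{(k+1)}(1) \;=\; d\,L_{d-k}^{(k-1)}(1),
\]
which follows by expressing $L_{d-k-1}^{(k+1)}(1)$ via \eqref{eq:nna} (with $n=d-k$, $\alpha=k$, $z=1$) and then invoking \eqref{eq:diff} (with $\alpha=k-1$). The case $k=d$ collapses to the trivial $L_0^{(d)}(1) = L_0^{(d-1)}(1) = 1$.

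The only delicate match is at $k=1$. The same algebraic identity $L_{d-1}^{(1)}(1) - L_{d-2}^{(2)}(1) = d\,L_{d-1}^{(0)}(1)$ handles the left-hand side, but on the right one must in addition identify the pure constant $1 - \sum_{m=1}^{d-1} L_{d-1-m}^{(m)}(1)/m!$ with $L_{d-1}^{(0)}(1)$; this is precisely identity \eqref{eq:Eric} specialized to $\alpha = 0$, $n = d-1$, $z = 1$. This step---using \eqref{eq:Eric} to recognize the constant term of $f_{d-1}(y+1)$ as a single Laguerre value---is the main obstacle; the remaining residue matches are routine combinations of the three-term identities \eqref{eq:nna} and \eqref{eq:diff}, and the uniqueness conclusion then identifies $\tilde X_d$ with $X_d$.
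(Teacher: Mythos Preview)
Your proposal is correct and follows essentially the same route as the paper: show that the claimed expression satisfies the defining recursion of Lemma~\ref{lm:key-lemma-d} by expanding both sides in partial fractions and matching residues, which reduces to the Laguerre identities \eqref{eq:nna} and \eqref{eq:diff} for the generic pole and to \eqref{eq:Eric} (the paper uses the equivalent \eqref{eq:binom} at $\alpha=0$) for the pole at $y=-1$. Your treatment of uniqueness---making explicit that the recursion fixes all residues while the asymptotic $X_d\to 1/d!$ fixes the constant---is in fact slightly more careful than the paper's, which leaves that point implicit.
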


\begin{remark} This equation can be rewritten as 
\begin{equation*}
X_d=\frac{1}{d!} \frac{L_d^{x/\hbar}(1)}{L_d^{x/\hbar}(0)}.
\end{equation*}
Indeed, we  just apply the identity~\eqref{eq:Eric} for $\alpha=x/\hbar$ and $z=1$ and further observe that $\binom{n+x/\hbar}{n}=L_n^{(x/\hbar)}(0)$.
\end{remark}

\begin{proof}[Proof of Proposition~\ref{prop:Laguerre}] It is obvious that both $X_d$ and 
\begin{equation}\label{eq:laguerre}
\tilde X_d:=\frac{1}{d!}\left(1-\sum_{m=1}^d \frac{1}{(m-1)!} L_{d-m}^{(m)}(1) \frac{\hbar}{x+m\hbar}\right)
\end{equation}
are rational functions with simple poles at $x/\hbar = -1,-2,\dots,-d$. We have defined $X_0:=1$, and it is easy to see that $Z_0=1$. Then we know (see Lemma~\ref{lm:key-lemma-d}) that all $X_d$ are unambiguously determined by the equation
\begin{equation}\label{eq:key-lemma-d-1}
\frac{1}{\frac{x}{\hbar}+1} \exp\left({\hbar\frac{d}{dx}}\right)X_{d}+\left[\frac{x}{\hbar} \exp\left(-{\hbar\frac{d}{dx}}\right)-\frac{x}{\hbar}\right]X_{d+1}=0
\end{equation}
for all $d\geq 0$.
In order to prove the proposition we check that $\{\tilde X_d\}_{d\geq 0}$ also satisfy this equation.

Indeed, observe that 
\begin{align}\label{eq:term1}
& \frac{1}{\frac{x}{\hbar}+1} \exp\left({\hbar\frac{d}{dx}}\right)\tilde X_{d}
\\ \notag
& =  \frac{1}{d!} \left(\frac{1}{\frac{x}{\hbar} +1} -\sum_{m=1}^d \frac{L_{d-m}^{(m)}(1)}{(m-1)!}\frac{1}{(\frac{x}{\hbar}+1)(\frac{x}{\hbar}+m+1)}  \right)
\\ \notag
& = \frac{1}{\frac{x}{\hbar} +1} \frac{1}{d!} \left(1- \sum_{m=1}^d \frac{L_{d-m}^{(m)}(1)}{m!}\right) 
+ \frac{1}{d!}\sum_{m=2}^{d+1} \frac{L_{d+1-m}^{(m-1)}(1)}{(m-1)!}\frac{1}{(\frac{x}{\hbar}+m)} ;
\end{align}
\begin{align}\label{eq:term2}
& \frac{x}{\hbar} \exp\left(-{\hbar\frac{d}{dx}}\right) \tilde X_{d+1} 
\\ \notag
& = \frac{1}{(d+1)!} \left(\frac{x}{\hbar} -\sum_{m=1}^{d+1} \frac{L_{d+1-m}^{(m)}(1)}{(m-1)!}\frac{\frac{x}{\hbar}}{\frac{x}{\hbar}+m-1}  \right)
\\ \notag
& = \frac{\frac{x}{\hbar}}{(d+1)!} -\frac{1}{(d+1)!}\sum_{m=1}^{d+1} \frac{L_{d+1-m}^{(m)}(1)}{(m-1)!}
+\frac{1}{(d+1)!}\sum_{m=1}^{d} \frac{L_{d-m}^{(m+1)}(1)}{(m-1)!}\frac{1}{\frac{x}{\hbar}+m} ;
\end{align}
and
\begin{align}\label{eq:term3}
& -\frac{x}{\hbar} \tilde X_{d+1} 
= \frac{1}{(d+1)!} \left(-\frac{x}{\hbar} +\sum_{m=1}^{d+1} \frac{L_{d+1-m}^{(m)}(1)}{(m-1)!}\frac{\frac{x}{\hbar}}{\frac{x}{\hbar}+m}  \right)
\\ \notag
& = \frac{-\frac{x}{\hbar}}{(d+1)!} +\frac{1}{(d+1)!}\sum_{m=1}^{d+1} \frac{L_{d+1-m}^{(m)}(1)}{(m-1)!}
-\frac{1}{(d+1)!}\sum_{m=1}^{d+1} \frac{L_{d+1-m}^{(m)}(1)}{(m-1)!}\frac{m}{\frac{x}{\hbar}+m} .
\end{align}

It is obvious that in the sum of the expressions~\eqref{eq:term1}, \eqref{eq:term2}, and~\eqref{eq:term3} the coefficient of $x/\hbar$ and the constant term vanish. So, we have to prove that the coefficient of each $1/(x/\hbar+m)$, $m=1,\dots,d+1$, vanishes. 

The coefficient of $1/(x/\hbar+d+1)$ is equal to
\begin{equation*}
\frac{1}{d!}\frac{L_{0}^{(d)}(1)}{d!}-\frac{1}{(d+1)!}\frac{(d+1)L_{0}^{(d+1)}(1)}{d!} ,
\end{equation*}
which is equal to zero since $L_0^{(d)}=L_0^{(d+1)}=1$.

The coefficient of $1/(x/\hbar+m)$, $2\leq m\leq d$, is equal to
\begin{equation}\label{eq:coeff-m}
\frac{1}{d!}\frac{L_{d+1-m}^{(m-1)}(1)}{(m-1)!}+\frac{1}{(d+1)!}\frac{L_{d-m}^{(m+1)}(1)}{(m-1)!}
-\frac{1}{(d+1)!} \frac{mL_{d+1-m}^{(m)}(1)}{(m-1)!}.
\end{equation}
First we use Equation~\eqref{eq:nna} for $z=1$, $\alpha=m$, and $n=d+m-1$:
\begin{equation}
(d+1-m)L_{d+1-m}^{(m)}(1)=(d+1)L_{d-m}^{(m)}(1) - L_{d-m}^{(m+1)}(1).
\end{equation}
We see then that expression~\eqref{eq:coeff-m} is equal to
\begin{equation*}
\frac{1}{d!(m-1)!}\left( L_{d+1-m}^{(m-1)}(1)-L_{d+1-m}^{(m)}(1)+L_{d-m}^{(m)}(1)\right).
\end{equation*}
And this is equal to zero due to Equation~\eqref{eq:diff} for $n=d+1-m$, $\alpha = m-1$, and $z=1$.

The coefficient of $1/(x/\hbar+1)$ is equal to
\begin{equation*}
\frac{1}{d!} \left(1- \sum_{m=1}^d \frac{L_{d-m}^{(m)}(1)}{m!}\right)+
\frac{1}{(d+1)!}{L_{d-1}^{(2)}(1)}
-\frac{1}{(d+1)!}L_{d}^{(1)}(1).
\end{equation*}
Using that $L_{d-1}^{(2)}(1)=-dL_d^{(1)}(1)+(d+1)L_{d-1}^{(1)}(1)$ (which is Equation~\eqref{eq:nna} for $z=1$, $n=d$, and $\alpha=1$)
and $-L_d^{(1)}(1)+L_{d-1}^{(1)}(1)=-L_{d}^{(0)}(1)$ (which is Equation~\eqref{eq:diff} for $z=1$, $d=n$, and $\alpha=0$), we see that this coefficient is equal to
\begin{equation*}
\frac{1}{d!} \left(1- \sum_{m=0}^d \frac{L_{d-m}^{(m)}(1)}{m!}\right).
\end{equation*}
This is equal to zero due to Equation~\eqref{eq:binom} for $z=1$, $\alpha=0$, and $n=d$.

Thus we see that the sum of expressions~\eqref{eq:term1}, \eqref{eq:term2}, and~\eqref{eq:term3} is equal to zero. So, the functions $\tilde X_d$ satisfy Equation~\eqref{eq:key-lemma-d-1}, and, therefore, $\tilde X_d=X_d$ for all $d\geq 0$.
\end{proof}

\section{Toda lattice equation}
\label{sec:Toda}

In this section we recall, for completeness, the Toda lattice equation for the partition function of the Gromov-Witten invariants of $\P$. We show that the Toda lattice equation implies a quadratic relation for the functions $X_d$, $d\geq 0$, considered in the previous sections. It is an open question whether it is possible to relate the Toda lattice equation to the quantum spectral curve equation.

It is convenient to include a degree variable $q$ in the free energy of the Gromov-Witten invariants of $\P$.  Define
\begin{equation}
\mathcal F_g :=  \sum_dq^d\left\langle\exp\left\{\sum_{i\geq 0}^{\infty}\tau_i(\omega)t_i+\tau_0(1)t\right\}\right\rangle_g^d
\end{equation}
(where we have switched off $\tau_k(1)$ for $k>0$).
Then $\mathcal F=\sum_{g=0}^\infty \mathcal F_g$ satisfies the Toda lattice equation:
\begin{equation}
\label{eq:toda-lattice}
\exp(\mathcal F(t+1)+\mathcal F(t-1)-2\mathcal F(t))=\frac{1}{q}\frac{\partial^2}{\partial t_0^2}\mathcal F(t),
\end{equation}
which was conjectured by Eguchi-Yang \cite{EY94} and proven by Dubrovin-Zhang \cite{DZ04} and Okounkov-Pandharipande \cite[Equation~(4.11)]{OP06}.

We specialise
\begin{equation}
\Phi(x,\hbar,q):=F\left(q=\frac{q}{\hbar^2},\ t=-\frac{1}{2},\ t_i=-i!\left(\frac{\hbar}{x}\right)^{i+1}\right)
\end{equation} 
and consider the function $\exp\left\{\Phi(x,\hbar,q)-\Phi(x,\hbar,0)\right\}$.

\begin{lemma} We have:
\begin{equation}\label{eq:X-series}
\exp\left\{\Phi(x,\hbar,q)-\Phi(x,\hbar,0)\right\}=\sum_{d=0}^\infty \left(\frac{q}{\hbar^2}\right)^dX_d.
\end{equation}
\end{lemma}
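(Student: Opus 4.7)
My plan is to reuse the operator-theoretic machinery developed in Sections~\ref{sec:reduction-to-semi-infinite-wedge}--\ref{sec:combinatorial}, carrying the degree variable $q$ through the derivation so that Lemma~\ref{lem:Xd} extends directly to a $q$-dependent statement.

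First, I introduce the $q$-deformed wave function
\begin{equation*}
\widetilde\Psi(x,\hbar,q) := \exp\bigl(\Phi(x,\hbar,q) + S_0(x)/\hbar + S_1(x)\bigr),
\end{equation*}
so that $\widetilde\Psi(x,\hbar,\hbar^2) = \Psi(x,\hbar)$ by the identification of $\log\Psi$ with $\Phi + S_0/\hbar + S_1$ coming from the computations of Section~\ref{sec:shift-of-variable}, and so that the universal prefactor $\exp(S_0/\hbar + S_1)$ cancels in the ratio:
\begin{equation*}
\exp\{\Phi(x,\hbar,q) - \Phi(x,\hbar,0)\} = \widetilde\Psi(x,\hbar,q)/\widetilde\Psi(x,\hbar,0).
\end{equation*}

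Second, I would repeat the argument leading to Lemma~\ref{lem:Xd} almost verbatim, but with $q$ active. In the passage from the connected free energy to the semi-infinite wedge formulation (Corollary~\ref{cor:operator} together with the exponentiation of Section~\ref{sec:combinatorial}), each degree-$d$ contribution appears paired with $\E_1(0)^d$ on the right and $\E_{-1}(0)^d$ on the left; the original derivation weights this pair by $1/\hbar^{2d}$, and the only change in the $q$-dependent setting is to replace this by $(q/\hbar^2)^d$. The remaining steps --- the shift $x \mapsto x + \hbar/2$ built into $\mathcal{A}(x)$, the splitting $\mathcal{A}(x) = A_1 + A_2$ from~\eqref{eq:A1}--\eqref{eq:A2}, and the commutation of the scalar $\exp(A_2)$ through the $\E_1(0)^d \E_{-1}(0)^d$ pair --- are all independent of $d$ and hence remain valid. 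The outcome is the $q$-enhanced analogue of~\eqref{eq:Xd}:
\begin{equation*}
\exp(1/\hbar^2)\,\widetilde\Psi(x,\hbar,q) = \exp\!\Bigl(B\bigl(-\hbar\tfrac{d}{dx}\bigr)\tfrac{x - x\log x}{\hbar}\Bigr)\sum_{d=0}^\infty (q/\hbar^2)^d X_d.
\end{equation*}

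Third, setting $q = 0$ collapses the right-hand side to the $d = 0$ term (since $X_0 = 1$), giving
\begin{equation*}
\exp(1/\hbar^2)\,\widetilde\Psi(x,\hbar,0) = \exp\!\Bigl(B\bigl(-\hbar\tfrac{d}{dx}\bigr)\tfrac{x - x\log x}{\hbar}\Bigr).
\end{equation*}
Dividing the general-$q$ identity by this specialization cancels both $\exp(1/\hbar^2)$ and the Bernoulli exponential, leaving precisely the identity claimed in the lemma.

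The main obstacle is the second step: verifying that the derivation of Lemma~\ref{lem:Xd} respects the $q$-weighting at every stage. The delicate point is the transition from the connected to the disconnected formulation encoded by the $\star$-convention of~\eqref{eq:Psi in operator}: one must check that the prohibition on $\E_1(0)$ and $\E_{-1}(0)$ commuting directly remains compatible with the $q$-weighted degree counting, and that the disconnected-vs-connected matching of degrees (where several components of total degree $d$ may combine) is consistent with the single $(q/\hbar^2)^d$ factor that the right-hand side carries for each $d$. Once this is granted, the subsequent operator manipulations --- shift of variable, splitting $\mathcal{A} = A_1 + A_2$, and the eigenvalue computation of $A_1$ on $v_\lambda$ --- are formal and $d$-independent, so they extend without modification.
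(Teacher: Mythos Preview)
Your strategy—carry $q$ through the proof of Lemma~\ref{lem:Xd} and divide by the $q=0$ specialization—is exactly what the paper does. The paper is simply more direct: it works with $\exp(\Phi)$ itself rather than any auxiliary $\widetilde\Psi$, observing that $\Phi(x,\hbar,q)$ \emph{is} the $q$-weighted connected generating function to which the Lemma~\ref{lem:Xd} argument applies (the $n=0$ terms being part of $\mathcal F$), and that its $d=0$ part is $\Phi(x,\hbar,0)=B(-\hbar\,d/dx)\bigl((x-x\log x)/\hbar\bigr)=A_2$. The $q$-version of~\eqref{eq:Xd} then reads $\exp\Phi=\exp(A_2)\sum_d(q/\hbar^2)^dX_d$, and dividing by the $q=0$ case gives the lemma.

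Your detour through $\widetilde\Psi=\exp(\Phi+S_0/\hbar+S_1)$ introduces a genuine inaccuracy, even though the final ratio survives it. The identification $\widetilde\Psi(x,\hbar,\hbar^2)=\Psi(x,\hbar)$ is false: $\Phi$ already contains the unstable $(0,1)$ and $(0,2)$ Gromov--Witten contributions (they lie in $\mathcal F_0$), so adding the full $S_0/\hbar+S_1$ double-counts them. More to the point, repeating the proof of Lemma~\ref{lem:Xd} with $q$ yields a formula for the object in which \emph{every} degree-$d$ piece—including those inside $S_0$ and $S_1$—carries weight $(q/\hbar^2)^d$; your $\widetilde\Psi$ leaves $S_0,S_1$ unweighted, so it is not that object, and the displayed identity in your step~3 does not follow as stated. (A related symptom: the factor $\exp(1/\hbar^2)$ arises from the connected $n=0$, $d=1$ term $\lv\E_1(0)\E_{-1}(0)\rv^\circ$ and becomes $\exp(q/\hbar^2)$ once $q$ is reinstated.) These discrepancies are $q$-independent and hence cancel in your final ratio, which is why you still land on the correct conclusion—but the intermediate formula is not what the argument actually produces. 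Finally, your concern about the $\star$-convention is a non-issue: degree is additive over connected components, so the $(q/\hbar^2)^d$ weighting factors multiplicatively and the connected-to-disconnected exponentiation respects the $q$-grading automatically.
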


\begin{proof}

Indeed, tracing back the arguments of Sections~\ref{sec:shift-of-variable} and~\ref{sec:reduction-to-semi-infinite-wedge} and taking $q$ into account this time, it is easy to see that
\begin{align}
\Phi(x, \hbar, q)=\sum_{d=0}^\infty \lb \frac{q}{\hbar^2} \rb^d \sum_{g=0}^\infty \la \exp \lb -\frac{1}{2} \tau_0(1) - \sum_{i=0}^\infty \tau_i(\omega) i! \lb\frac{\hbar}{x}\rb^{i+1} \rb \ra^{d}_g.
\end{align}
The proof of Lemma~\ref{lem:Xd} implies that
\begin{align}
\label{eq:phi-zero-b}
\Phi(x, \hbar, 0) & = \sum_{g=0}^\infty \la \exp \lb -\frac{1}{2} \tau_0(1) - \sum_{i=0}^\infty \tau_i(\omega) i! \lb\frac{\hbar}{x}\rb^{i+1} \rb \ra^{ 0}_g \\ \notag
& = B\lb-\hbar \frac{d}{dx} \rb \lb \frac{x - x \log x}{\hbar} \rb
\end{align}
These two observations and Equation~\eqref{eq:Xd} imply Equation~\eqref{eq:X-series}.
\end{proof}

By abuse of notation, we denote the function $\exp\left\{\Phi(x,\hbar,q)-\Phi(x,\hbar,0)\right\}$ also by $X$ (so-called degree-weighted $X$). The quantum spectral curve equation for this degree-weighted $X$ reads
\begin{equation}
\left[\frac{q}{x+\hbar} \exp\left({\hbar\frac{d}{dx}}\right)+x \exp\left(-{\hbar\frac{d}{dx}}\right)-x\right]X=0.
\end{equation}

The Toda lattice equation combined with the string and the divisor equations implies the following equation for $X$:

\begin{proposition} We have:
\begin{equation}  \label{eq:toda}
\frac{X(x+\hbar)X(x-\hbar)}{X(x)^2}=\frac{x+\hbar}{x}\frac{\partial}{\partial q}\left(q\frac{\partial}{\partial q}\log X(x)\right).
\end{equation}
\end{proposition}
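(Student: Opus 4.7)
The idea is to specialize the Toda lattice equation~\eqref{eq:toda-lattice} to our variables via the substitution $(t, t_i, q) \mapsto (-1/2, -i!(\hbar/x)^{i+1}, q/\hbar^2)$, and then invoke the string and divisor equations for the Gromov--Witten theory of $\P$ to rewrite both sides in terms of $X = \exp(\Phi(x,\hbar,q) - \Phi(x,\hbar,0))$.

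First I would handle the LHS of Toda. In our sector (where $\tau_k(1)$ is turned off for $k \geq 1$) the string equation reads $\partial_t \mathcal F = \sum_{k\geq 0} t_{k+1}\partial_{t_k}\mathcal F + t\,t_0$. It defines a vector field $V = \partial_t - \sum_k t_{k+1}\partial_{t_k}$ with $V\mathcal F = tt_0$, whose flow for time $s$ takes $(t, t_k) \mapsto (t+s, \sum_{j\geq 0}(-s)^j t_{k+j}/j!)$. The key direct computation is that under our specialization the transformed couplings simplify as
\begin{equation*}
\sum_{j\geq 0}\frac{(-s)^j}{j!}\bigl(-(k+j)!(\hbar/x)^{k+j+1}\bigr) = -k!\Bigl(\frac{\hbar}{x+s\hbar}\Bigr)^{k+1},
\end{equation*}
so the string flow in specialized variables is exactly the simultaneous shift $(t, x) \mapsto (t+s, x+s\hbar)$. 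Integrating the inhomogeneous source $tt_0$ along the flow at $s=\pm 1$ then yields $\mathcal F(t\pm 1)\bigr|_{\mathrm{spec}} = \Phi(x\mp\hbar) + C_\pm(x)$ with explicit correction terms $C_\pm$.

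Next I would handle the RHS. The divisor equation in our sector gives $\partial_{t_0}\mathcal F = q\partial_q\mathcal F + (\text{unstable corrections})$, and since $\partial_{t_0}$ and $q\partial_q$ act on independent variables, applying it twice and dividing by $q$ gives $(1/q)\partial_{t_0}^2\mathcal F = \partial_q(q\partial_q\mathcal F) + (\text{corrections})$. Because the operator $q\partial_q$ is invariant under the rescaling $q \to q/\hbar^2$, the specialized RHS of Toda reads $\partial_q(q\partial_q \Phi)$ plus corrections.

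The factor $(x+\hbar)/x$ in the target equation is explained by the passage $\Phi \leadsto \log X = \Phi - \Phi(x,\hbar,0)$. Using the explicit form $\Phi(x,\hbar,0) = B(-\hbar D)\bigl((x-x\log x)/\hbar\bigr)$ from~\eqref{eq:phi-zero-b} (with $D = d/dx$ and $B(t) = t/(e^t-1)$) together with the operator identity $B(-\hbar D)(e^{\pm\hbar D}-1) = \pm\hbar D\,e^{\pm\hbar D}$ (a direct consequence of $B(t)(e^t-1) = t$), a short computation with $f = (x-x\log x)/\hbar$ gives $\Phi(x+\hbar,0)-\Phi(x,0) = -\log(x+\hbar)$ and $\Phi(x-\hbar,0)-\Phi(x,0) = \log x$; hence $\Phi(x+\hbar,0)+\Phi(x-\hbar,0)-2\Phi(x,0) = -\log\bigl((x+\hbar)/x\bigr)$, which after exponentiation produces exactly the factor $(x+\hbar)/x$ relating $X(x+\hbar)X(x-\hbar)/X(x)^2$ to $\exp(\Phi(x+\hbar)+\Phi(x-\hbar)-2\Phi(x))$.

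The main obstacle will be to verify that the correction terms $C_\pm$ from the integrated string equation and the unstable corrections in the divisor equation conspire to leave, after all cancellations, exactly the target equation with its simple factor $(x+\hbar)/x$ and no residual polynomial artifacts. A low-order check in $q$, comparing both sides using the explicit formulas for $X_d$ from Section~\ref{sec:key}, should guide the precise identification and combination of these corrections.
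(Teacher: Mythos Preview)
Your outline is essentially the paper's own strategy: specialize the Toda lattice equation, convert the $t$-shifts on the left to $x$-shifts, convert the $t_0$-derivatives on the right to $q$-derivatives via the divisor equation, and absorb the $q$-independent remainder into the explicit factor $(x+\hbar)/x$ using the closed form of $\Phi(x,\hbar,0)$. Your computation of that last factor (via $B(-\hbar D)(e^{\pm\hbar D}-1)$ acting on $(x-x\log x)/\hbar$) is exactly what the paper does.

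Two points where your execution diverges from the paper and could be tightened. First, on the right-hand side you worry about ``unstable corrections'' from the divisor equation; in fact there is nothing to track. The divisor equation in this sector reads $\partial_{t_0}\mathcal{F}=\tfrac12 t^2+q\partial_q\mathcal{F}$, and since $\tfrac12 t^2$ is independent of $t_0$, applying $\partial_{t_0}$ once more gives $\partial_{t_0}^2\mathcal{F}=q\partial_q(q\partial_q\mathcal{F})$ \emph{exactly}; hence $\tfrac{1}{q}\partial_{t_0}^2\mathcal{F}=\partial_q(q\partial_q\mathcal{F})$ with no residue. Second, on the left-hand side the paper does not integrate the string-equation flow with its inhomogeneous source $tt_0$ and then chase the correction terms $C_\pm$. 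Instead it invokes the shift-of-variable analysis of Section~\ref{sec:shift-of-variable}: that section establishes (via the string equation, applied term by term) that inserting $e^{t\tau_0(1)}$ in the correlators is equivalent, after the specialization $t_i=-i!(\hbar/x)^{i+1}$, to the replacement $x\mapsto x-t\hbar$ \emph{for the full $\Phi$}, unstable pieces included. This packages your $C_\pm$ directly into the shift of $\Phi$ (in particular of $\Phi(\cdot,\hbar,0)$), so that $\mathcal{F}(t\pm1)\big|_{\text{spec}}=\Phi(x\mp\hbar)$ with no leftover terms to balance. Your low-order check in $q$ would of course confirm this, but the cleaner route is to observe once that the string-flow inhomogeneity is $q$-independent and hence is entirely absorbed in the $\Phi(\cdot,\hbar,0)$ contribution you have already computed.
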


\begin{proof} We recall the divisor equation 
\begin{equation}
\label{eq:divisor}
\frac{\partial \mathcal F}{\partial t_0}=\frac{1}{2}t^2+q\frac{\partial \mathcal F}{\partial q}.
\end{equation}
Consider Equations~\eqref{eq:toda-lattice}. The result of Section \ref{sec:shift-of-variable} implies that the shifts in $\tau_0(1)$-coefficient $t$ can be replaced by shifts of variable $x$ with factor $\hbar$. Using this and Equation~\eqref{eq:divisor} we obtain equation for
$\Phi(x, \hbar, q)$:
\begin{equation}
\exp\lb\Phi(x + \hbar,\hbar,q) + \Phi(x - \hbar,\hbar,q) - 2 \Phi(x,\hbar,q)\rb
= \frac{\partial}{\partial q} \lb q \frac{\partial}{\partial q}  \Phi(x,\hbar,q) \rb.
\end{equation}

From the definition of degree-weighted $X$ it follows that 
\begin{equation}
\frac{\partial}{\partial q}\left(q\frac{\partial}{\partial q}\log X(x)\right) = \frac{\partial}{\partial q} \lb q \frac{\partial}{\partial q}  \Phi_q(x) \rb.
\end{equation}
Therefore, in order to prove Equation~\eqref{eq:toda}, it is enough to show that
\begin{equation}
\label{eq:toda-remnant}
\frac{\exp\left({\Phi(x + \hbar,\hbar,0)}\right) \exp\left({\Phi(x - \hbar,\hbar,0)}\right)}
{\exp\left({2 \Phi(x,\hbar,0)}\right)} = \frac{x}{x + \hbar}.
\end{equation}
Indeed, using Equation~\eqref{eq:phi-zero-b} we can represent the left hand side of Equation~\eqref{eq:toda-remnant} in the following form:
\begin{align}
& \exp\lb \left. \left[ \lb e^{t} + e^{-t} - 2 \rb B(t) \right] \right|_{t = -\hbar \frac{d}{dx} } \lb \frac{x - x\log x}{\hbar} \rb\rb
\\ \notag & = \exp\lb \left. \left[ \lb e^{t/2} - e^{-t/2} \rb^2 \frac{t}{e^t - 1} \right] 
\right |_{t = -\hbar \frac{d}{dx}} 
\lb \frac{x - x\log x}{\hbar} \rb\rb \\ \notag & = \exp \lb \log (x) - \log (x + \hbar) \rb.
\end{align}
\end{proof}

The homogeneous in $q$ part of the Toda lattice equation~\eqref{eq:toda} for the series expansion given by Equation~\eqref{eq:X-series} can be rewritten as 
\begin{equation}
\label{eq:post-toda-quadratic}
\frac{x}{x + \hbar} \sum_{a + b = d} X_a(x + \hbar) X_b(x -\hbar) = \sum_{a + b = d + 1} X_a(x) X_b(x) \frac{(a - b)^2}{2}
\end{equation}
for any $d\geq 0$. 
We can use Equation~\eqref{eq:key-lemma-d}  in order to rewrite this equation as 
\begin{align}
\label{eq:post-toda-quadratic-one-level}
& \sum_{a + b = d + 1} \lb X_a(x) X_b(x - \hbar) - X_a(x - \hbar) X_b(x - \hbar) \rb 
\\ \notag
& = \frac{\hbar^2}{x^2} \sum_{a + b = d + 1} X_a(x) X_b(x) \frac{(a - b)^2}{2},
\end{align}
for any $d\geq 0$. It would be interesting to see whether Equation~\eqref{eq:post-toda-quadratic} or Equation~\eqref{eq:post-toda-quadratic-one-level} can be related to Equation~\eqref{eq:key-lemma-d}.

\end{document}